\newcommand{\remove}[1]{}
\newtheorem{theorem}{Theorem}[section]
\newtheorem{claim}[theorem]{Claim}
\newtheorem{lemma}[theorem]{Lemma}
\newtheorem{definition}[theorem]{Definition}
\newtheorem{corollary}[theorem]{Corollary}
\newtheorem{conjecture}[theorem]{Conjecture}
\newtheorem{open}[theorem]{Open problem}
\newcommand{\eps}{\varepsilon}
\newcommand{\R}{\mathbb{R}}
\newcommand{\Z}{\mathbb{Z}}
\newcommand{\E}{\mathbb{E}}
\newcommand{\ip}[1]{\langle{#1}\rangle}
\newcommand{\sign}{\text{sign}}
\newcommand{\A}[1]{\mathcal{A}_{#1}}
\newcommand{\eqdef}{:=}
\newcommand{\infer}{\text{infer}}
\newcommand{\restate}[2]{\medskip \noindent {\bf #1 (restated).} {\sl #2}\\}
\title{Near-optimal linear decision trees for k-SUM and related problems}
\author{
Daniel M. Kane\thanks{Department of Computer Science and Engineering/Department of Mathematics, University of California, San Diego. {\tt dakane@ucsd.edu} Supported by NSF CAREER Award ID 1553288 and a Sloan fellowship.}
\and Shachar Lovett\thanks{Department of Computer Science and Engineering, University of California, San Diego. {\tt slovett@cs.ucsd.edu.} Research supported by NSF CAREER award 1350481, CCF award 1614023 and a Sloan fellowship.}
\and Shay Moran\thanks{Department of Computer Science and Engineering, University of California, San Diego, 
Simons Institute for the Theory of Computing, Berkeley, and Max Planck Institute for Informatics, Saarbr\"{u}cken, Germany. {\tt  shaymoran1@gmail.com.}}
}
\date{}
\begin{document}
\maketitle

\begin{abstract}
We construct near optimal linear decision trees for a variety of decision problems in combinatorics and discrete geometry.
For example, for any constant $k$, we construct linear decision trees that solve the $k$-SUM problem on $n$ elements
using $O(n \log^2 n)$ linear queries.
Moreover, the queries we use are comparison queries, which compare the sums of two $k$-subsets;
when viewed as linear queries, comparison queries are $2k$-sparse and have only $\{-1,0,1\}$ coefficients.
We give similar constructions for sorting sumsets $A+B$ and for solving the SUBSET-SUM problem, both with optimal number of queries,
up to poly-logarithmic terms.

Our constructions are based on the notion of ``inference dimension",
recently introduced by the authors in the context of active classification with comparison queries.
This can be viewed as another contribution to the fruitful link between machine learning and discrete geometry,
which goes back to the discovery of the VC dimension.
%
\end{abstract}

\section{Introduction}


This paper studies the linear decision tree complexity of several combinatorial problems,
such as $k$-SUM, SUBSET-SUM, KNAPSACK, sorting sumsets, and more.
A common feature these problems share is that they are all instances of
the following fundamental problem in computational geometry.

\paragraph{The point-location problem.}
Let $H \subset \R^n$ be a finite set.
Consider the problem in which given $x\in \R^n$ as an input,
the goal is to compute the function
\[\A{H}(x) \eqdef \bigl(\sign(\ip{x,h}): h \in H\bigr) \in \{-,0,+\}^H,\]
where $\sign:\R \to \{-,0,+\}$ is the sign function and $\ip{\cdot,\cdot}$ is the standard inner product in~$\R^n$.

In discrete geometry this is known as the \emph{point-location in an hyperplane-arrangement problem},
in which each $h\in H$ is identified with the hyperplane orthogonal to $h$,
and $\A{H}(x)$ corresponds to the cell in the partition induced by the hyperplanes in $H$
to which the input point $x$ belongs.

A dual formulation of this problem has been considered in learning theory, specifically within the context of active learning:
here, each $h\in H$ is thought of as a point, $x$ is thought of as the learned half-space,
and computing $\A{H}(x)$ corresponds to learning how each point $h\in H$
is classified by $x$.
In this work it will often be more intuitive to consider this dual formulation.
See Figure~\ref{fig:hyp} for a planar illustration of both interpretations.

\begin{figure}
\begin{center}
\includegraphics[width=.75\textwidth]{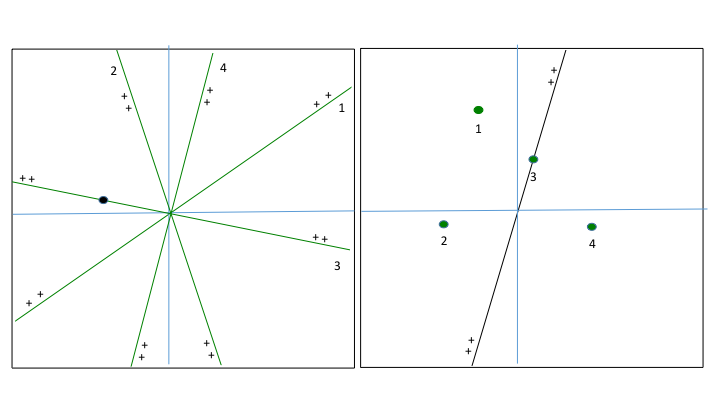}
\end{center}
\caption{ Primal and dual forms of the point-location problem.
$H$ is the green lines/points, and $x$ is the black point/line, and $\A{H}(x)= (+,+,0,-)$ }
\label{fig:hyp}
\end{figure}

\paragraph{Linear decision tree.}
A \emph{linear decision tree} for the point-location problem $\A{H}$ is an adaptive deterministic algorithm $T$.
The set $H \subset \R^n$ is known in advance, and the input is $x \in \R^n$. The algorithm does not have direct access to $x$. Instead,
at each iteration the algorithm chooses some $h \in \R^n$ and queries ``$\sign(\ip{h,x})=?$'' (note that $h$ is not necessarily in $H$). At the end, the algorithm should be
able to compute $\A{H}(x)$ correctly. The \emph{query complexity} is the maximum over $x$ of the number of queries performed. Equivalently, such
an algorithm can be described by a ternary decision tree which computes the sign of a linear query at each inner node. A query is $s$-sparse
if it involves at most $s$ nonzero coefficients. A linear decision tree is $s$-sparse if all its queries are $s$-sparse.

\paragraph{Comparison decision tree.}
A comparison decision tree for the point-location problem $\A{H}$
is a special type of a linear decision tree, where the only queries used
are either of the form $\sign(\ip{h,x})$ for $h \in H$ (\emph{label queries}),
or $\sign(\ip{h'-h'',x})$ for $h',h'' \in H$ (\emph{comparison queries}).
Note that $\ip{h'-h'',x} \geq 0$ if and only if $\ip{h',x}\geq \ip{h'',x}$,
which is why we call these comparison queries.
In the dual version (in which we view $H$ as a set of points),
comparison queries have a natural geometric interpretation:
assuming that $\sign(\ip{h',x})=\sign(\ip{h'',x})$, a comparison query $\ip{h'-h'',x}$,
corresponds to querying which one of $h',h''\in H$ is further from the hyperplane
defined by $x$.
Observe that if all elements $h \in H$ are $s$-sparse then a
comparison decision tree is $2s$-sparse.

\subsection{Results}
Our main result is a method that produces near optimal decision trees for many natural and well studied combinatorial instances for the point-location
problems by using comparison decision trees. We first describe a few concrete instances, and then the general framework.

\subsubsection{$k$-SUM}
In the $k$-SUM problem an input array $x\in\R^n$ of $n$ numbers is given,
and the goal is to decide whether the sum of $k$ distinct numbers is $0$.
This problem (in particular $3$-SUM) has been extensively studied since the 1990s,
as it embeds into many problems in computational geometry,
see for example~\cite{gajentaan1995class}. More recently, it has also been studied in the context of
fine-grained complexity, see for example the survey~\cite{vassilevska2015hardness}.

The $k$-SUM problem corresponds to the following point-location problem.
Let $H\subseteq\{0,1\}^n$ denote all vectors of hamming weight $k$.
Thus, $x\in\R^n$ contains $k$ numbers whose sum is $0$
if and only if $\A{H}(x)$ contains at least one $0$ entry.

In this context, comparison decision trees allow for two types of linear queries:
label queries of the form
``$\sum_{i \in I} x_i \ge 0 ?$'' where $I \subset [n]$ has size $|I|=k$,
and comparison queries of the form ``$\sum_{i \in I} x_i \ge \sum_{j \in J} x_j ?$'' where $I,J \subset [n]$ have size $|I|=|J|=k$.

\begin{theorem}\label{thm:ksum}
The $k$-SUM problem on $n$ elements can be computed by a comparison decision tree of depth $O(kn \log^2 n)$.
In particular, all the queries are $2k$-sparse and have only $\{-1,0,1\}$ coefficients.
\end{theorem}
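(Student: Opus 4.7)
My plan is to instantiate the \emph{inference dimension} framework developed in the authors' prior work on active learning with comparison queries. Viewing the $k$-SUM instance as a point-location problem with $H=\{h\in\{0,1\}^n:|h|=k\}$, the inference dimension of $H$ (with respect to halfspace labelings equipped with pairwise comparisons) is defined as the smallest $d$ such that any $d$ points $h_1,\dots,h_d\in H$, together with a label+comparison pattern consistent with some $x\in\R^n$, admit an index~$i$ whose label $\sign\ip{h_i,x}$ and whose comparisons $\sign\ip{h_i-h_j,x}$ are logically implied by the remaining labels and comparisons. The key algorithmic theorem I would invoke asserts that whenever a class has inference dimension at most $d$, the point-location problem $\A{H}$ admits a comparison decision tree of depth $O(d\cdot \mathrm{polylog}|H|)$ whose queries are only label queries on elements of $H$ and comparisons between pairs in $H$.

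The technical heart of the argument is to bound the inference dimension of this particular $H$. I would prove a general statement: for \emph{any} finite $H\subset\R^n$, the inference dimension is polynomial in~$n$. The idea is that a label+comparison pattern on $h_1,\ldots,h_d$ is exactly the signed order of the reals $\ip{h_1,x},\ldots,\ip{h_d,x}$, and as $x$ varies over $\R^n$ the number of realizable signed orders is bounded by the number of full-dimensional cells in the arrangement of the $\binom{d}{2}+d$ hyperplanes $\{h_i^\perp\}\cup\{(h_i-h_j)^\perp\}$ in $\R^n$. Standard arrangement bounds give only $O(d^{2n})$ cells, so once $d$ is a sufficiently large polynomial in~$n$, a Sauer--Shelah / pigeonhole style argument forces one of the $d$ points to be redundant. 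For the specific class of $k$-sparse $\{0,1\}$-vectors I would then exploit that each $h$ lives in a $k$-dimensional coordinate subspace to replace the ambient $n$ by an effective per-point dimension scaling like~$k$, which is what ultimately produces the ``linear in $kn$'' rather than ``linear in $n$ times $k^2$'' dependence.

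Combining this inference-dimension bound with $|H|=\binom{n}{k}$, so $\log|H|=O(k\log n)$, in the algorithmic theorem yields the claimed comparison decision tree of depth $O(kn\log^2 n)$. The sparsity and coefficient bounds are automatic: each label query has the form $\sum_{i\in I}x_i$ with $|I|=k$ and each comparison query the form $\sum_{i\in I}x_i-\sum_{j\in J}x_j$ with $|I|=|J|=k$, hence is $2k$-sparse with $\{-1,0,1\}$ coefficients.

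The main obstacle I expect is pinning down the precise interplay of $n$ and $k$ in the inference-dimension estimate. A crude $\tilde O(n)$ bound on the inference dimension of arbitrary $H\subset\R^n$ already suffices to get \emph{some} polynomial depth, but matching the tight $O(kn\log^2 n)$ requires carefully leveraging $k$-sparsity of the elements of $H$, either by refining the arrangement count to account for the low-dimensional support of each point, or by analyzing how the framework behaves under a random restriction to coordinates. Once this bound is in hand, the remainder of the proof is a direct application of the general sampling-and-inference algorithm from the framework.
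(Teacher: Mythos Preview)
Your high-level plan is exactly the paper's: cast $k$-SUM as a point-location problem for $H=\{h\in\{0,1\}^n:|h|=k\}$, bound the inference dimension of $H$, and plug into the generic comparison-decision-tree theorem together with $\log|H|=O(k\log n)$. The sparsity and $\{-1,0,1\}$-coefficient claims are indeed automatic. The gap is entirely in your inference-dimension argument.

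Your proposed route---count realizable signed orders via the cell count of the arrangement $\{h_i^\perp\}\cup\{(h_i-h_j)^\perp\}$ in $\R^n$ and then pigeonhole---does not yield an inference-dimension bound. Bounding the number of label+comparison patterns as $x$ varies says nothing about whether, for a \emph{fixed} $x$, some $h_i$ is determined by the others; inference is a geometric containment condition on the cell $P_S(x)$, not a counting condition on the family of cells. In particular, your claim that ``for any finite $H\subset\R^n$ the inference dimension is polynomial in $n$'' is false: the paper itself notes (Section~\ref{sec:open}) examples of $H\subset\R^3$ with \emph{unbounded} inference dimension. So no argument using only the ambient dimension can work, and the refinement you sketch (replacing $n$ by an effective per-point dimension via $k$-sparsity) does not rescue it.

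What the paper actually exploits is the \emph{integer} structure of $H$: Theorem~\ref{thm:infdim} shows that any $H\subset\{h\in\Z^n:\|h\|_1\le w\}$ has inference dimension $O(n\log w)$. The mechanism is a lattice-point pigeonhole (Claim~\ref{claim:pigeonhole}): sort $h_1,\dots,h_m$ by $\ip{h_i,x}$ and look at $\{0,1\}$-combinations of the consecutive differences $h_{i+1}-h_i$; these all lie in an integer $\ell_1$-ball of radius $O(wm)$, whose cardinality is at most $(O(wm)/n)^n$, so once $2^{m-1}$ exceeds this a collision gives a $\{-1,0,1\}$-dependency among the differences. This dependency is then massaged into an expression of some $h_{p+1}-h_1$ as a \emph{nonnegative} combination of earlier differences, which is exactly what lets the labels and comparisons on $\{h_1,\dots,h_p\}$ infer $\sign\ip{h_{p+1},x}$. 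For $k$-SUM one has $w=k$, so $d=O(n\log k)$, and then Theorem~\ref{thm:det} gives depth $O((d+n\log(nd))\log|H|)=O(n\log n\cdot k\log n)=O(kn\log^2 n)$. The $k$-dependence in the final bound comes from $\log|H|$, not from a per-point dimension reduction.
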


This improves a series of works. There is a simple algorithm
based on hashing that solves $k$-SUM in time $O(n^{\lceil k/2 \rceil})$. It can
be transformed to a linear decision tree with the same number of queries, which in our language are all label queries.
Erickson~\cite{erickson1995lower} showed
that $\Omega(n^{\lceil k/2 \rceil})$
queries are indeed necessary to solve $k$-SUM if only label queries are allowed (or more generally, if only $k$-sparse linear queries are allowed).
Ailon and Chazelle~\cite{ailon2005lower} extended the lower bound, and showed that if the linear queries have sparsity less than $2k$,
than a super-linear lower bound of $n^{1+\Omega(1)}$ holds for the number of queries (note that indeed the near-linear comparison decision tree
given by Theorem~\ref{thm:ksum} is $2k$-sparse).

In a breakthrough work, Gr{\o}nlund and Pettie~\cite{gronlund2014threesomes} were the first to break the $n^{\lceil k/2 \rceil}$ bound.
They constructed a randomized $(2k-2)$-linear decision tree for $k$-SUM which makes $O(n^{k/2} \sqrt{\log n})$ queries.
This was improved to $O(n^{k/2})$ by Gold and Sharir~\cite{gold2015improved}.

In the general linear decision tree model, without any sparsity assumptions, a series of works in discrete geometry have designed
linear decision trees for the general point-location problem. In the context of $k$-SUM, the best result is
of Ezra and Sharir~\cite{ezra2016decision}, who constructed a linear decision tree of depth $O(n^2 \log^2 n)$ for any constant $k$.
This improves on previous results of Meyer auf der Heide~\cite{meyer1984polynomial}, Meiser~\cite{meiser1993point} and
Cardinal et al.~\cite{cardinal2015solving}.

\subsubsection{Sorting $A+B$}
Let $A,B \subset \R$ be sets of size $|A|=|B|=n$. Their sumset, denoted by $A+B$ is the set $\{a+b: a\in A,b \in B\}$.
Consider the goal of sorting $A+B$ while minimizing the number of comparisons
(here, by comparisons we mean the usual notion in sorting, that is comparing two elements of $A+B$).
While it is possible that $|A+B|=n^2$, it is well known that the number of possible orderings of $A+B$ is only $n^{O(n)}$~\cite{fredman1976good}.
Thus, from an information theoretic perspective it is conceivable that $A+B$ can be sorted using
only $O(n \log n)$ comparisons.
However, Fredman~\cite{fredman1976good} gave a tight bound of $\Theta(n^2)$ on the number
of comparisons needed to sort $A+B$.

It is natural to ask whether enabling the algorithm more access to the data
in the form of simple local queries can achieve $o(n^2)$ query-complexity.
We show that if the algorithm can use \emph{differences-comparisons}
than an almost optimal query-complexity of $O(n\log^2n)$ suffices to sort $A+B$.
A differences-comparison on an array $[x_1,\ldots,x_n]$ is a query of the form
\[\text{``}x_i-x_j\geq x_k-x_l\;?\text{''};\]
in words:
``is $x_i$ greater than $x_j$ more than $x_k$ is greater than $x_l$?''.

The problem of sorting $A+B$ corresponds to the
following point-location problem. Let $A=\{a_1,\ldots,a_n\}, B=\{b_1,\ldots,b_n\}$ and identify $x \in \R^{2n}$
with $x=(a_1,\ldots,a_n,b_1,\ldots,b_n)$. Let $H \subset \{-1,0,1\}^{2n}$ consist of vectors with exactly one $1$ and one $-1$
in the first $n$ elements, and exactly one $1$ and one $-1$ in the last $n$ elements. Then computing $\A{H}(x)$ corresponds to answering
all queries of the form ``$a_i+b_j \ge a_k + b_l ?$" for all $i,j,k,l \in [n]$, which amounts to sorting $A+B$.
In this context, the two types of queries used by comparison decision trees
are comparison queries in $A+B$, namely ``$a_i+b_j \ge a_k+b_l ?$" where $i,j,k,l \in [n]$
(which correspond to the label queries in the point location problem),
and differences-comparison queries in $A+B$,
namely ``$a_i+b_j-a_{i'}-b_{j'} \ge a_k+b_l-a_{k'}-b_{l'} ?$" where $i,j,k,l,i',j',k',l' \in [n]$
(which correspond to comparison queries in the point location problem).
%

\begin{theorem}\label{thm:sortAB}
Given $A,B \subset \R$ of size $|A|=|B|=n$, their sumset $A+B$ can be sorted by a comparison decision tree of depth $O(n \log^2 n)$.
In particular, all queries are $8$-sparse with $\{-1,0,1\}$ coefficients.
\end{theorem}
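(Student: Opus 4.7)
The plan is to apply the generic construction that builds short comparison decision trees for any hypothesis class of small \emph{inference dimension} (a notion the authors develop earlier in this paper), together with a problem-specific bound on the inference dimension of the hypothesis class associated with sorting $A+B$. Taking the generic reduction as a black box, one expects it to yield comparison decision trees of depth roughly $d \cdot \mathrm{polylog}(|H|,n)$ for any finite $H \subseteq \R^N$ of inference dimension $d$. The algorithm behind this reduction is a ``subsample-and-infer'' loop: sample a random subset of $H$ of size proportional to $d \log |H|$, query the signs of its elements, invoke the inference-dimension guarantee to conclude that a constant fraction of the remaining elements of $H$ are determined, and recurse on the uninferred remainder.

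For sorting $A+B$, the hypothesis class $H \subseteq \{-1,0,1\}^{2n}$ is as described before the theorem: all vectors with exactly one $+1$ and one $-1$ in each of the $A$- and $B$-blocks. In particular $|H| = \Theta(n^4)$ and $\log|H| = \Theta(\log n)$. To obtain the claimed depth $O(n\log^2 n)$, it therefore suffices to show that the inference dimension of this $H$ is $O(n)$.

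The main technical step, then, is to bound the inference dimension. I would proceed as follows: take any $S \subseteq H$ with $|S|$ a sufficiently large multiple of $n$ and any labeling of $S$ realized by some $x \in \R^{2n}$; the goal is to locate some $h_0 \in S$ whose sign is determined by the signs on $S \setminus \{h_0\}$. Since $H$ lives in $\R^{2n}$, once $|S| > 2n$ the vectors of $S$ admit a nontrivial linear relation $\sum_{h \in S} c_h \cdot h = 0$, yielding the identity $\sum_h c_h \ip{h,x} = 0$ for every realization. Exploiting the $\{-1,0,1\}$-structure and $4$-sparse support of the elements of $H$, one can pick such a dependence with a small and combinatorially structured support (for instance, a short cycle in the bipartite grid of pairs $(i,j)\in[n]^2$ that indexes elements of $A+B$), and then a direct sign analysis on this short combination pins down a forced sign on one of the queries.

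The main obstacle is this last step: converting a linear dependence into a genuine sign-inference. Existence of a dependence comes for free from dimension counting, but ensuring that the signs on $S \setminus \{h_0\}$ actually \emph{force} the sign on $h_0$ requires careful compatibility between the coefficients $c_h$ and those signs. Handling adversarial choices of $S$, degenerate labelings (where some $\ip{h,x}$ vanish), and exploiting the combinatorial structure of $H$ specifically---rather than relying solely on a generic dimension-based argument---is where the main technical work will lie. Once these issues are settled, combining the inference-dimension bound $d = O(n)$ with $\log|H|=O(\log n)$ in the generic reduction gives depth $O(n\log^2 n)$, and the sparsity claims ($4$-sparse label queries and $8$-sparse difference-comparison queries, all with $\{-1,0,1\}$ coefficients) are immediate from the definition of the queries used by a comparison decision tree on $H$.
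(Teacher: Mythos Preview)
Your high-level plan is exactly the paper's: encode sorting $A+B$ as the point-location problem for $H \subset \{-1,0,1\}^{2n}$, bound the inference dimension of $H$, and feed that bound into the generic comparison-decision-tree construction (the paper's Theorem~\ref{thm:det}). The target $d=O(n)$ and $\log|H|=O(\log n)$ are also correct, and the sparsity/coefficient claims follow exactly as you say.

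Where you diverge from the paper is in \emph{how} the inference-dimension bound is obtained. The paper does no problem-specific work here: since every $h\in H$ has $\|h\|_1=4$, the generic Theorem~\ref{thm:infdim} (inference dimension of $\{h\in\Z^N:\|h\|_1\le w\}$ is $O(N\log w)$) already gives $d=O(n)$, and Theorem~\ref{thm:sortAB} then follows by plugging into Corollary~\ref{cor:general} with $w=1$. Your proposal instead tries to prove $d=O(n)$ from scratch via ``take $|S|>2n$, extract a linear dependence, convert to an inference''.

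That conversion is a genuine gap, and the obstacle you flag is real. A linear dependence $\sum_h c_h\, h=0$ with arbitrary real $c_h$ does \emph{not} by itself let you infer any sign from label and comparison queries: you know only the signs of $\ip{h,x}$ and of $\ip{h'-h'',x}$, not magnitudes, so unless the dependence expresses some $h_0$ (or $h_0-h_1$) as a \emph{non-negative} combination of vectors whose signs you already control, nothing is forced. The paper's proof of Theorem~\ref{thm:infdim} gets exactly this conic structure, and the mechanism is not dimension-counting but a pigeonhole argument: sort the $h_i$ by $\ip{h_i,x}$, look at all $\{0,1\}$-combinations of the consecutive differences $h_{i+1}-h_i$, and use the bounded $\ell_1$-norm to show two such combinations collide in $\Z^{2n}$; subtracting gives $\sum_i\alpha_i(h_{i+1}-h_i)=0$ with $\alpha_i\in\{-1,0,1\}$, which after a telescoping rewrite places some $h_{p+1}-h_1$ in the cone spanned by the earlier $h_{i+1}-h_i$. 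Your cycle-in-the-grid idea might be salvageable for this particular $H$, but as stated it does not supply the needed conic combination, whereas the paper's route does---and is strictly more general, handling $k$-SUM and SUBSET-SUM with the same lemma.
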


The problem of sorting sumsets has been considered by Fredman~\cite{fredman1976good}, who
showed that if only comparison queries are allowed, then $\Theta(n^2)$ queries are sufficient and necessary
to sort $A+B$. Gr{\o}nlund and Pettie~\cite{gronlund2014threesomes} use it in their work,
and specifically ask for a better linear decision tree for sorting sumsets.

\subsubsection{NP-hard problems}
Several NP-hard problems can be phrased as point-location problems. For example,
the \emph{SUBSET-SUM} problem is to decide, given a set $A$ of $n$ real numbers,
whether there exists a subset of $A$ whose sum is $0$.
The \emph{KNAPSACK} problem is to decide whether there exists a subset of $A$ whose sum is $1$.
We focus here on
SUBSET-SUM for concreteness.

The SUBSET-SUM problem corresponds to the
following point-location problem. Let $A=\{a_1,\ldots,a_n\}$ and take $x=(a_1,\ldots,a_n) \in \R^n$.
Let $H = \{0,1\}^{n}\backslash \{0^n\}$. Then $A$ has a subset whose sum is $0$ if and only if $\A{H}(x)$ contains at least one $0$.

In this context, comparison decision trees have two types of queries:
label queries of the form ``$\sum_{i \in A'} a_i \ge 0 ?$" for some $A' \subseteq A$,
and comparison queries of the form ``$\sum_{i \in A'} a_i \ge \sum_{i \in A''} a_i ?$" for some $A',A'' \subseteq A$.

\begin{theorem}\label{thm:subsetsum}
The SUBSET-SUM problem can be solved using a comparison decision tree of depth $O(n^2 \log n)$,
where $n$ is the size of the input-set.
In particular, all the queries are linear with $\{-1,0,1\}$ coefficients.
\end{theorem}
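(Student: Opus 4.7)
The plan is to invoke the general reduction developed earlier in the paper that converts an inference-dimension bound for a hyperplane family $H\subset\R^n$ into a comparison decision tree for $\A{H}$ of depth $\tilde O(d\cdot \log|H|)$. For SUBSET-SUM we have $H=\{0,1\}^n\setminus\{0^n\}$ and $|H|=2^n-1$, so $\log|H|=\Theta(n)$. Consequently, to reach depth $O(n^2\log n)$, it suffices to establish that the inference dimension of $H$ is $\tilde O(n)$, and then feed this bound into the framework.

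The heart of the proof is the inference-dimension bound, which rests on a simple linear-algebraic observation: any $n+1$ vectors $h_1,\ldots,h_{n+1}\in\{0,1\}^n\subset\R^n$ admit a nontrivial affine dependence $\sum_i\alpha_i h_i=0$ with $\sum_i\alpha_i=0$. Pairing with $x$, the values $v_i\eqdef\ip{x,h_i}$ satisfy $\sum_i\alpha_i v_i=0$, so the joint order-type of the $v_i$'s (together with their signs relative to $0$) is heavily constrained. Given the answers to all label and comparison queries on a random sample $S\subseteq H$, one exploits these dependencies to deduce $\sign(\ip{x,h})$ for many $h\in H\setminus S$. Concretely, I would aim to show that with constant probability over a uniformly random $S$ of size $O(n\log n)$, the comparison answers on $S$ force the correct sign for at least a constant fraction of $H\setminus S$; this is precisely the inference-dimension condition.

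Once the inference bound is in place, the construction follows the halving-style template from the framework: in each phase, draw a fresh random sample of size $O(n\log n)$ from the still-unlabelled vectors, make the $\tilde O(n^2)$ pairwise comparison and label queries on it, infer the signs of a constant fraction of the remaining vectors, and recurse. Only $O(\log|H|)=O(n)$ phases are required, giving total depth $O(n^2\log n)$; all queries compare two $\{0,1\}$-sums and so are linear with $\{-1,0,1\}$ coefficients, as claimed. The main technical obstacle is the quantitative strengthening of the affine-dependence observation: for each $(n+1)$-tuple it easily produces \emph{one} forced sign, but upgrading this to a constant-fraction statement over a random sample of size $O(n\log n)$ is the combinatorial work — and it is essentially the only step specific to SUBSET-SUM, since the remaining reduction is a plug-in of the general inference-dimension machinery.
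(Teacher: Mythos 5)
Your high-level plan matches the paper's: reduce to the general inference-dimension framework, show the inference dimension of $H=\{0,1\}^n\setminus\{0^n\}$ is $\tilde O(n)$, and run the halving recursion for $O(\log|H|)=O(n)$ phases. But the step you yourself flag as ``the main technical obstacle'' is exactly where the content lies, and your proposed route to it does not work. A generic affine dependence $\sum_i\alpha_i h_i=0$, $\sum_i\alpha_i=0$, with arbitrary real coefficients does not let you infer any sign from label and comparison answers: e.g.\ from $v_1+v_2=2v_3$ together with $v_1>0>v_2$ and the full ordering of the $v_i$'s, the sign of $v_3$ is still undetermined. Inference requires \emph{nonnegative} coefficients on differences that are already known to be nonnegative against $x$. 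The paper gets this by sorting $S_+$ as $0<\ip{h_1,x}\le\cdots\le\ip{h_m,x}$, so all consecutive differences $h_{i+1}-h_i$ pair nonnegatively with $x$, and then running a pigeonhole count: there are $2^{m-1}$ subset sums $\sum\beta_i(h_{i+1}-h_i)$ with $\beta\in\{0,1\}^{m-1}$, but because $\|h\|_1\le w$ each such sum is an integer vector of $\ell_1$ norm at most $2w(m-1)$, so there are at most $(O(wm/n))^n$ distinct values. For $m=O(n\log w)$ two subset sums collide, yielding a $\{-1,0,1\}$ relation $\sum\alpha_i(h_{i+1}-h_i)=0$, which after adding the telescoping identity for $h_{p+1}-h_1$ expresses $h_{p+1}-h_1$ as a \emph{nonnegative integer} combination of consecutive differences --- and that is what forces $\ip{h_{p+1},x}\ge\ip{h_1,x}>0$. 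This counting-over-bounded-integer-vectors argument (Theorem~\ref{thm:infdim} and Claim~\ref{claim:pigeonhole}) is the missing idea; also note it needs $m=O(n\log n)$ here since $\|h\|_1\le n$ for $h\in\{0,1\}^n$, not $\|h\|_\infty\le 1$ alone.

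Two further quantitative gaps. First, your per-phase budget is wrong: making all $\tilde O(n^2)$ pairwise comparisons on a sample of size $O(n\log n)$ over $O(n)$ phases gives $\tilde O(n^3)$ queries, not $O(n^2\log n)$. The paper instead \emph{sorts} the sample, and even mergesort gives $O(n\log^2 n)$ per phase, hence $O(n^2\log^2 n)$ total; to reach $O(n^2\log n)$ it invokes Fredman's sorting algorithm, using the fact that only $(2e|S|^2)^n$ orderings of $S$ by $\ip{\cdot,x}$ are realizable, to sort in $O(|S|+n\log|S|)$ comparisons per phase. Second, the theorem asserts a deterministic decision tree, while your recursion draws fresh random samples each phase; the paper derandomizes via a double-sampling argument (Lemma~\ref{lemma:inf_half_uni}) to obtain a single universal sample $S$ of size $O(d+n\log d)$ that infers a constant fraction of $H$ for \emph{every} $x$ simultaneously.
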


Note that the bound is tight up to the log factor:
indeed, in the corresponding point-location problem, $H=\{0,1\}^n$, and thus $\{\A{H}(x) : x\in\R^n\}$
corresponds to the family of thresholds function on the boolean cube. It is well known that the number
of such functions is $2^{\Theta(n^2)}$~\cite{DBLP:conf/ifip/GotoT62}, and thus any decision tree (even one that uses arbitrary queries,
each with a constant number of possible answers)
that computes $\A{H}(x)$ must use at least $\Omega(n^2)$ queries.

The surprising fact that SUBSET-SUM, an NP-hard problem, has a polynomial time algorithm in a nonuniform model
(namely, linear decision trees)
was first discovered by Meyer auf der Heide~\cite{meyer1984polynomial},
answering an open problem posed by Dobkin and Lipton~\cite{dobkin1974some} and Yao~\cite{yao1981parallel}.
It originally required $O(n^4 \log n)$ linear queries. It was generalized by Meiser~\cite{meiser1993point} to the general
point-location problem, and later improved by Cardinal~\cite{cardinal2015solving} and Ezra and Sharir~\cite{ezra2016decision}.
This last work, although it does not address the SUBSET-SUM directly, seems to improves the number of queries to $O(n^3 \log^2 n)$.
Observe that our construction gives a near-optimal number of linear queries, namely $O(n^2 \log n)$. Moreover,
the queries are simple, in the sense that they involve only $\{-1,0,1\}$ coefficients,
and natural from a a computational perspective as they only compare the sums of subsets.
This is unlike the previous works mentioned, which requires arbitrary coefficients due to the geometric nature of their techniques.

\subsubsection{Other applications}
Our framework (see Corollary~\ref{cor:general}) is pretty generic, and as such gives near optimal linear decision
trees for a host of problems considered in the literature. For example, the following problems were considered in~\cite{gronlund2014threesomes}.
We discuss each one briefly, and refer the interested reader to~\cite{gronlund2014threesomes} for a deeper discussion.

\paragraph{$k$-LDT.} Given a fixed linear equation $\phi(x_1,\ldots,x_k) = \alpha_0 + \sum_{i=1}^k \alpha_i x_i$ and a set $A \subset \R$ of size $|A|=n$,
the goal is to decide if there exist distinct $a_1,\ldots,a_k \in A$ such that $\phi(a_1,\ldots,a_k)=0$. This problem is a variant of the $k$-SUM
problem, and can be embedded as a point-location problem in $\R^{nk+1}$ as follows. Let
$x=(1,\alpha_1 a_1, \ldots, \alpha_1 a_n, \ldots, \alpha_k a_1, \ldots, \alpha_k a_n)$ and $H \subset \{-1,0,1\}^{nk+1}$
consists of $h$ which have a ``$-1$'' in their first coordinate, a single ``$+1$'' in each of the $k$ blocks of size $n$, and $0$ elsewhere.
Corollary~\ref{cor:general} implies a comparison decision tree with $O(kn \log^2 n)$ queries which are $(2k+2)$-sparse and with $\{-1,0,1\}$ coefficients.
For constant $k$ this gives $O(n \log^2 n)$, which improves upon the previous best bound of $O(n^2 \log^2 n)$ of~\cite{ezra2016decision}.

\paragraph{Zero triangles.} Let $G=(V,E)$ be a graph on $|V|=n$ vertices and $|E|=m$ edges, which is known in advance
(it is not part of the input).
The inputs are edge weights $x:E \to \R$.
The goal is to decide if there is a triangle in $G$ whose sum is zero. This problem clearly embeds as a point-location problem in $\R^m$.
Corollary~\ref{cor:general} gives a comparison decision tree which solves this problem with $O(m \log^2 m)$ queries. All the queries
are $6$-sparse and have $\{-1,0,1\}$ coefficients. This improves upon the previous bound of $O(m^{5/4})$ of~\cite{gronlund2014threesomes}.

\subsection{General framework}

Our results are based on the notion of ``inference dimension", which was recently introduced by the authors~\cite{kane2017active}
in the context of active learning.

\begin{definition}[Inference]
Let $S \subset \R^n$ and $h,x \in \R^n$. We say that \emph{$S$ infers $h$ at $x$} if ``$\sign(\ip{h,x})$''
is determined by the answers to the label and comparison queries on $S$. That is, if we set
\[
P_S(x) \eqdef \{x' \in \R^n: \A{S \cup (S-S)}(x')=\A{S \cup (S-S)}(x)\}
\]
then $\sign(\ip{x',h})=\sign(\ip{x,h})$ for all $x' \in P_S(x)$.
We further define the \emph{inference set} of $S$ at $x$ to be
\[
\infer(S,x) \eqdef \{h \in \R^n: S \text{ infers } h \text{ at } x\}.
\]
For each $h \in \infer(S,x)$, we refer to $\sign(\ip{h,x})$ as the \emph{inferred value} of $h$ at $x$.
\end{definition}

An equivalent geometric condition to ``$S$ infers $h$ at $x$" is that the hyperplane defined by $h$ is either disjoint from $P_S(x)$ or contains $P_S(x)$.

For example, if $h_1,h_2$ are such that $\sign(\ip{h_1,x})=\sign(\ip{h_2,x})=0$,
and $h$ is in the linear space spanned by $h_1,h_2$ then $\sign(\ip{h,x})=0$ and so $\{h_1,h_2\}$
infer $h$ at $x$. Similarly, if $\sign(\ip{h_1,x})=\sign(\ip{h_2-h_1,x})=+1$,
and $h$ is in the cone spanned by $h_1,h_2-h_1$ (i.e.\ $h=\alpha h_1 + \beta (h_2-h_1)$ for $\alpha,\beta>0$)
then $\sign(\ip{h,x})=+1$ and so $\{h_1,h_2\}$ infer $h$ at $x$.

\begin{definition}[Inference dimension]
Let $H \subset \R^n$. The inference dimension of $H$ is the minimal $d \ge 1$ for which the following holds. For any subset $S \subset H$
of size $|S|\geq d$, and for any $x \in \R^n$, there exists $h \in S$ such that $S \setminus \{h\}$ infers $h$ at $x$.
\end{definition}

We refer the reader to~\cite{kane2017active}
for some simple examples and further discussion regarding the inference dimension.

The first step in the proof of Theorem~\ref{thm:ksum}, Theorem~\ref{thm:sortAB} and Theorem~\ref{thm:subsetsum},
is to show that the sets $H$ in the corresponding point location problems are of low inference dimension.
The following general theorem provides a uniform treatment for this.

For $h \in \Z^n$ defines it $\ell_1$ norm as $\|h\|_1 = \sum_{i=1}^n |h_i|$.
\begin{theorem}
\label{thm:infdim}
The inference dimension of $H = \{ h\in \Z^n :\|h\|_1 \le w\}$ is $d=O(n \log w)$.
\end{theorem}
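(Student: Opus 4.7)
My approach is a case analysis based on the signs of $\ip{h, x}$ for $h \in S$, reducing the main case to a Siegel-style pigeonhole argument on integer vectors. Partition $S = S_+ \sqcup S_0 \sqcup S_-$ by $\sign(\ip{h, x})$. If $|S_0| > n$, the vectors of $S_0$ lie in a proper subspace of $\R^n$, so some nontrivial integer dependence $\sum c_i h_i = 0$ holds among them. Any $h^*$ with $c_{h^*} \neq 0$ expresses as a linear combination of the other elements of $S_0$; since these all have $\ip{h, x'} = 0$ throughout $P_{S \setminus \{h^*\}}(x)$, this forces $\ip{h^*, x'} = 0$ identically, and $h^*$ is inferred.

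Otherwise $|S_+| + |S_-| \ge |S| - n$, so WLOG $|S_+| = \Omega(n \log w)$. Let $m = |S_+|$ and sort so that $0 < v_1 \le v_2 \le \ldots \le v_m$ with $v_i = \ip{h_i, x}$. My goal is to find $h_{i^*}$ expressible as a non-negative integer combination of the label queries $h_j$ (for $j \neq i^*$, $h_j \in S_+$) and comparison queries $h_j - h_k$ (for $j > k$, $j, k \neq i^*$); any such certificate proves $\ip{h_{i^*}, x'} > 0$ throughout $P_{S \setminus \{h_{i^*}\}}(x)$ and hence infers $h_{i^*}$. To obtain it, I would apply pigeonhole to the ``step vectors'' $g_i := h_i - h_{i-1}$ with $h_0 := 0$: each $g_i$ satisfies $\|g_i\|_\infty \le 2w$ and $\ip{g_i, x} \ge 0$, and their $2^m$ subset sums lie in a box of at most $(4mw+1)^n$ lattice points. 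For $m$ a sufficiently large multiple of $n \log w$, this forces a collision $\sum_A g_i = \sum_B g_i$ for disjoint nonempty $A, B \subseteq [m]$, which telescopes to a relation $\sum_j b_j h_j = 0$ whose suffix sums $\sum_{j \ge k} b_j$ all lie in $\{-1, 0, 1\}$.

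The main obstacle is the translation from this raw integer dependence into a certified non-negative expression for a specific $h_{i^*}$. Setting $i^* := \max\{j : b_j \neq 0\}$ and writing $h_{i^*}$ in terms of the remaining $h_j$'s, one must verify that the resulting coefficient pattern does realize $h_{i^*}$ as a non-negative combination of the allowed label and comparison queries --- equivalently, a Hall-type suffix-domination condition on $A$ versus $B$. The bounded-suffix-sum structure of the dependence, together with the non-negativity $\ip{g_i, x} \ge 0$ and the availability of all $h_j - h_k$ with $j, k \le i^* - 1$ as building blocks, should allow such a certificate to be assembled via telescoping. Edge cases in which $g_{i^*}$ or $g_{i^*+1}$ (whose queries involve $h_{i^*}$) would appear in the expansion can be handled by restricting the pigeonhole domain to $[m] \setminus \{i^*, i^*+1\}$ for a pre-chosen $i^*$, or by iterating the argument with a shrinking index set.
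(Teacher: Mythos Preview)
Your outline is essentially the paper's proof: the same sign-partition, the same pigeonhole on $\{0,1\}$-combinations of consecutive differences, and the same goal of extracting a non-negative certificate. Two points deserve comment.

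\textbf{The ``main obstacle'' is a two-line trick, not a Hall condition.} You have $\sum_{i}\alpha_i g_i=0$ with $\alpha_i\in\{-1,0,1\}$ not all zero, where $g_i=h_i-h_{i-1}$. Let $p$ be the largest index with $\alpha_p\neq 0$; negating if necessary, take $\alpha_p=-1$. Now simply add the telescoping identity $h_p - h_0=\sum_{i=1}^{p}g_i$ to the relation $0=\sum_{i=1}^{p}\alpha_i g_i$ to obtain
\[
h_p \;=\; \sum_{i=1}^{p}(\alpha_i+1)\,g_i \;=\; \sum_{i=1}^{p-1}(\alpha_i+1)\,g_i,
\]
where every coefficient $\alpha_i+1\in\{0,1,2\}$ is non-negative and the last equality uses $\alpha_p+1=0$. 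The right-hand side involves only $g_1,\ldots,g_{p-1}$, i.e.\ only $h_1,\ldots,h_{p-1}$, so no edge case with $g_{i^*}$ or $g_{i^*+1}$ arises and no pre-choosing or iteration is needed. Since each $g_i$ with $i\ge 2$ is a comparison query and $g_1=h_1$ is a label query with $\ip{h_1,x}>0$, this certifies $\ip{h_p,x'}>0$ for all $x'\in P_{S\setminus\{h_p\}}(x)$. (The paper runs the identical argument with $h_0$ omitted, obtaining $h_{p+1}-h_1$ as the non-negative combination and then using $\ip{h_1,x'}>0$.)

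\textbf{The counting needs $\ell_1$, not $\ell_\infty$.} Bounding $\|g_i\|_\infty\le 2w$ and placing subset sums in a box of $(4mw+1)^n$ points forces $m>n\log_2(4mw+1)$, which only yields $m=O(n\log(nw))$, not the claimed $O(n\log w)$. The fix is to use $\|h_i\|_1\le w$ directly: every subset sum $\sum_{i\in A}g_i$ has $\ell_1$-norm at most $2w(m-1)$, and the number of integer vectors in $\Z^n$ with $\ell_1$-norm at most $L$ is at most $2^n\binom{L+n}{n}\le\bigl(2e(L+n)/n\bigr)^n$. This gives the pigeonhole inequality $2^{m-1}>\bigl(2e(2w+1)m/n\bigr)^{n}$, which is satisfied for $m=O(n\log w)$.
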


Next, we show that sets of low inference dimension have efficient comparison decision trees.
As a first step, we show this for \emph{zero-error randomized comparison decision trees}.
A zero-error randomized comparison decision tree
is a distribution over (deterministic) comparison decision trees $T$,
each solves $\A{H}(x)$ correctly for all inputs. The \emph{expected query complexity}
is the maximum over $x$, of the expected number of queries performed by $T(x)$ to compute $\A{H}(x)$.

\begin{theorem}
\label{thm:zeroerror}
Let $H \subset \R^n$ be a finite set with inference dimension $d$. Then there exists a zero-error randomized comparison decision tree which computes $\A{H}$,
whose expected query complexity is ${O\bigl((d +n\log d) \log |H|\bigr)}$.
\end{theorem}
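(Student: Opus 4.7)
The plan is to use an iterative halving scheme. The algorithm maintains a set $H_t \subseteq H$ of still-unknown elements; in each round it samples a small random subset $S \subseteq H_t$, extracts $\A{S \cup (S-S)}(x)$ using a few queries, and removes from $H_t$ every element that is inferred by $S$ at $x$. Correctness (zero error) is automatic because each reported label is either queried directly or provably inferred from answered queries. The analysis then has two independent ingredients: a halving lemma (each round shrinks $|H_t|$ by a constant factor in expectation) and an efficient subroutine (each round can be implemented in $O(d + n\log d)$ comparison queries).

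\emph{Halving lemma.} I claim that if $S \subseteq H_t$ is a uniformly random subset of size $m = 2d$, then for every $h \in H_t$ the probability that $S$ fails to infer $h$ at $x$ is at most $1/2$. The proof is an exchangeability/peeling argument: form $T = S \cup \{h\}$. Iteratively apply the inference dimension property to $T$, each time removing some element that is inferred by the rest, until $|T| = d-1$; this is possible since every intermediate set of size $\geq d$ admits such an element by the definition of inference dimension. Call the resulting set the \emph{core}. By monotonicity of inference ($\infer(S',x) \subseteq \infer(S'',x)$ whenever $S' \subseteq S''$), any element that was removed is inferred from the larger set $T \setminus \{h\} = S$. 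Exchangeability of the elements of $T$ gives $\Pr[h \in \mathrm{core}] \leq (d-1)/(m+1) \leq 1/2$, hence $\E[|H_{t+1}|] \leq |H_t|/2$ and $O(\log|H|)$ expected rounds in total.

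\emph{Efficient per-round implementation.} A naive round costs $|S| + \binom{|S|}{2} = \Theta(d^2)$ queries, which is too expensive. Instead, first make the $m = O(d)$ label queries $\A{S}(x)$ directly. For the comparison part $\A{S-S}(x)$, the task reduces to locating $x$ in the arrangement of the $O(d^2)$ hyperplanes $S - S \subseteq \R^n$. Using a Meiser-type adaptive point-location procedure, this can be done with $O(n \log(d^2)) = O(n \log d)$ additional queries, and the queries can be chosen to lie inside $H \cup (H-H)$, keeping us within the comparison decision tree model. Once the cell of $x$ in the $S \cup (S-S)$ arrangement is identified, $\A{S \cup (S-S)}(x)$ is fully known, so the sign of every $h \in \infer(S,x) \cap H_t$ can be read off and $h$ removed from $H_t$. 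The total per-round cost is $O(d + n \log d)$ comparison queries.

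Combining the two ingredients, the algorithm runs for $O(\log|H|)$ expected rounds at $O(d + n\log d)$ queries each, giving the claimed expected bound. The main technical obstacle is the second ingredient: off-the-shelf Meiser-style point-location algorithms use arbitrary linear queries, not necessarily from $H \cup (H-H)$, so restricting to comparison queries while still achieving the $O(n\log d)$ cell-location bound is the delicate part. This is what replaces the naive $O(d^2)$ per round and ultimately yields the near-optimal $(d + n\log d)\log|H|$ overall complexity.
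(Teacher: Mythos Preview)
Your overall architecture and halving lemma match the paper exactly: sample $|S|=2d$ elements, infer at least half of $H_t$ in expectation, iterate for $O(\log|H|)$ rounds. Your exchangeability/peeling argument is a correct rephrasing of the paper's Claim~4.1 and Lemma~4.2.

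The gap is entirely in the per-round subroutine. Two problems with invoking Meiser-type point location on the $O(d^2)$ hyperplanes $S-S$:
\begin{itemize}
\item Meiser (and the subsequent Cardinal~\etal, Ezra--Sharir refinements) does \emph{not} achieve $O(n\log m)$ queries for $m$ hyperplanes in $\R^n$; the known bounds are polynomial in $n$ with an extra factor (e.g.\ $O(n^2\log m)$ at best). So even granting arbitrary linear queries, your stated $O(n\log d)$ per round does not follow from that line of work.
\item As you yourself note, those algorithms use general linear queries, and there is no known way to force them into $H\cup(H-H)$.
\end{itemize}

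The paper resolves this with a different, self-contained idea: computing $\A{S-S}(x)$ is exactly the same as \emph{sorting} the $2d$ real numbers $\{\ip{h,x}:h\in S\}$ using ordinary comparisons, and those comparisons are precisely the comparison queries $\sign(\ip{h'-h'',x})$. Fredman's theorem says any family $\Pi$ of orderings on $m$ items can be sorted with $2m+\log|\Pi|$ comparisons. Here the orderings are determined by the cell of $x$ in the arrangement $S-S$, so $|\Pi|\le (2e|S|^2)^n$ by the standard cell-count bound, giving $\log|\Pi|=O(n\log d)$. Hence each round costs $O(d)+O(d+n\log d)=O(d+n\log d)$ genuine comparison queries. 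This is the missing ingredient; once you replace ``Meiser'' by ``Fredman sorting with the cell-count bound on $|\Pi|$'', your proof is complete and identical to the paper's.
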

A slightly weaker version of Theorem~\ref{thm:zeroerror} appears in~\cite{kane2017active} (see Theorem 4.1 there).
The next step is to de-randomize Theorem~\ref{thm:zeroerror} and obtain a deterministic comparison decision tree.

\begin{theorem}
\label{thm:det}
Let $H \subset \R^n$ be a finite set with inference dimension $d$. Then there exists a comparison decision tree which computes $\A{H}$,
whose query complexity is ${O((d + n\log (nd)) \log |H|)}$.
\end{theorem}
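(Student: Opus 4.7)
The plan is to derandomize Theorem~\ref{thm:zeroerror} by a union-bound argument: at each node of the tree I would replace the randomized sample with a deterministic sample that works for every input $x$ consistent with the current branch.

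The randomized algorithm of Theorem~\ref{thm:zeroerror} proceeds in $O(\log|H|)$ rounds, each maintaining an unresolved set $H'\subseteq H$, drawing a random sample $S\subseteq H'$ of size $O(d+n\log d)$, performing queries on $S\cup(S-S)$, and removing every $h\in H'$ that is inferred by $S$ at $x$. The inference-dimension property, combined with a sampling argument in the spirit of $\eps$-nets for set systems of dimension~$d$, ensures that $S$ shrinks $H'$ by a constant factor with probability exponentially close to $1$ in the sample size (beyond the threshold $d$). The key observation for derandomization is that the inferred set $\infer(S,x)$ depends on $x$ only through the sign pattern $\A{S\cup(S-S)}(x)$, and the number of distinct such patterns, as $x$ ranges over $\R^n$, is bounded by the number of cells in a hyperplane arrangement of $|S|+|S|^2$ hyperplanes in $\R^n$, hence by $|S|^{O(n)}$.

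Given this exponential concentration, I would enlarge the random sample to size $s = O(d + n\log(nd))$ so that the failure probability of a random $S$ to halve $H'$ on a \emph{fixed} $x$ is at most $|S|^{-\Omega(n)}$. A union bound over the $|S|^{O(n)}$ sign patterns then yields a \emph{deterministic} sample $S$ of size $O(d + n\log(nd))$ that halves $H'$ for every $x$. Iterating this construction for $O(\log|H|)$ rounds produces a comparison decision tree of depth $O((d+n\log(nd))\log|H|)$: at each internal node, the argument above selects $S$ based on the current $H'$ and the signs of queries already made, the tree branches on $\A{S\cup(S-S)}$, and each branch recurses on the corresponding surviving subset $H'\setminus \infer(S,x)$.

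The main obstacle is obtaining the required exponential concentration — i.e., a quantitative strengthening of the halving claim underlying Theorem~\ref{thm:zeroerror} so that a random sample of size $s$ fails to halve $H'$ with probability at most $2^{-\Omega(s/d)}$, rather than merely the constant success probability needed for the randomized bound. This should follow from a standard $\eps$-net argument applied to the set system whose ``bad events'' correspond to $h\in H'$ not being inferred by $S$ (this system has complexity controlled by the inference dimension $d$), but matching the constants so that only an additive $O(n\log(nd))$ overhead over $d$ is incurred — as opposed to the multiplicative overhead one would get from naively repeating independent samples — is the delicate step.
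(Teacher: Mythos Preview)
Your high-level plan—derandomize each round by replacing the random sample with a universal one, via a union bound over sign patterns—matches the paper's, and you correctly isolate the crux: getting only additive overhead over $d$ rather than multiplicative. The gap is in the union bound itself. You observe that for a \emph{fixed} $S$ there are at most $|S|^{O(n)}$ sign patterns of $S\cup(S-S)$, and want to combine this with a per-$x$ failure bound for a \emph{random} $S$. But these do not combine: the sign-pattern representatives depend on $S$, so there is no fixed collection of $|S|^{O(n)}$ many $x$'s to union over \emph{before} drawing $S$ (the only obvious fixed collection—one representative per cell of the arrangement of $H\cup(H-H)$—has size $|H|^{O(n)}$, which would force $s=\Omega(n\log|H|)$ per round and destroy the bound). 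Moreover, the direct concentration you hope for, something like $\Pr_S[\text{fail on a fixed }x]\le 2^{-\Omega(s-d)}$, is neither proved nor clearly true, and inference dimension is not a VC-type parameter for any set system that would make an off-the-shelf $\eps$-net bound apply in the way you suggest.

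The paper's resolution is \emph{double sampling} (the symmetrization trick of Vapnik and Chervonenkis). Draw $S$ of size $s=O(d+n\log d)$ and a uniform $T\subset S$ of size $2d$, and show $\Pr[E(S)]\le 4\Pr[E(S,T)]$, where $E(S,T)$ is the event that for some $x$, $T$ infers fewer than $|H|/8$ elements of $H$ yet at least $|S|/4$ elements of $S$. Now condition on $T$: the union bound runs over the $2^{O(n\log d)}$ sign patterns of the \emph{small, fixed} set $T\cup(T-T)$, and for each such $x$ the conditional probability (over $S\supset T$) that $S$ oversamples the small set $\infer(T,x)\cap H$ is $2^{-\Omega(s)}$ by a plain Chernoff bound. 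No concentration for $|\infer(S,x)\cap H|$ is ever needed. Taking $s=O(d+n\log d)$ beats the union bound, and Fredman's sorting then gives the stated per-round cost $O(d+n\log(nd))$.
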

The proof of Theorem~\ref{thm:det} uses a \emph{double-sampling argument},
a technique originated in the study of uniform convergence bounds in statistical learning theory~\cite{vapnik1971uniform}.
The following corollary summarizes the above theorems concisely.
For $h \in \Z^n$ define $\|h\|_{\infty} = \max |h_i|$.

\begin{corollary}
\label{cor:general}
Let $H \subset \Z^n$ be such that $\|h\|_{\infty} \le w$ for all $h \in H$.
Then there exists a comparison decision tree computing $\A{H}$ whose query complexity is
${O\bigl(n\log (nw) \log |H|\bigr)}$.
\end{corollary}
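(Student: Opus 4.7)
The plan is to combine Theorem~\ref{thm:infdim} with Theorem~\ref{thm:det} in the straightforward way, after converting an $\ell_\infty$ bound into an $\ell_1$ bound.

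First I would observe that if $\|h\|_\infty \le w$ for all $h \in H$, then $\|h\|_1 \le nw$ for all $h \in H$, so
\[
H \subseteq H' \eqdef \{h \in \Z^n : \|h\|_1 \le nw\}.
\]
By Theorem~\ref{thm:infdim}, the inference dimension of $H'$ is $O(n\log(nw))$. Next I would invoke a one-line monotonicity observation: directly from the definition of inference dimension, if $H \subseteq H'$ then every subset $S \subseteq H$ of size $|S| \ge d$ is also a subset of $H'$ of size $\ge d$, so the inferability condition inherited from $H'$ applies to $S$. Hence the inference dimension of $H$ is at most $d = O(n\log(nw))$.

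Now I would feed this bound into Theorem~\ref{thm:det} with this value of $d$, which produces a deterministic comparison decision tree for $\A{H}$ of query complexity
\[
O\bigl((d + n\log(nd))\log |H|\bigr).
\]
With $d = O(n \log(nw))$, we have $nd = O(n^2 \log(nw))$, so $\log(nd) = O(\log n + \log\log(nw)) = O(\log(nw))$. Therefore
\[
d + n \log(nd) = O(n \log(nw)) + O(n\log(nw)) = O(n\log(nw)),
\]
and the final query complexity is $O(n \log(nw) \log |H|)$, as required.

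The proof is essentially bookkeeping: the real work is already contained in Theorems~\ref{thm:infdim} and~\ref{thm:det}. The only mildly non-automatic step is noting the $\ell_\infty$-to-$\ell_1$ conversion (which costs the factor of $n$ inside the logarithm, explaining the $\log(nw)$ rather than $\log w$) and verifying that inference dimension is monotone under subsets, both of which are immediate from the definitions. Consequently I expect no genuine obstacle; the main thing to be careful about is not to lose a factor in simplifying $\log(nd)$.
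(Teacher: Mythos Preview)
Your proposal is correct and matches the paper's own proof almost verbatim: convert the $\ell_\infty$ bound to an $\ell_1$ bound of $nw$, invoke Theorem~\ref{thm:infdim} to get inference dimension $d=O(n\log(nw))$, and plug into Theorem~\ref{thm:det}. The paper is terser (it leaves the monotonicity of inference dimension and the simplification of $d+n\log(nd)$ implicit), but the argument is identical.
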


\begin{proof}
Observe that $\|h\|_{1} \le n |h\|_{\infty} \le nw$. By Theorem~\ref{thm:infdim}, the inference dimension of $H$ is $d=O(n \log(nw))$.
The corollary now follows from Theorem~\ref{thm:det}.
\end{proof}

One can now verify that
Theorem~\ref{thm:ksum}, Theorem~\ref{thm:sortAB} and Theorem~\ref{thm:subsetsum} follow from
Corollary~\ref{cor:general} by setting $w=1$.

\paragraph{Paper organization.}
We begin with some preliminaries in Section~\ref{sec:prelim}. We prove Theorem~\ref{thm:infdim}
in Section~\ref{sec:infdim}. We prove Theorem~\ref{thm:zeroerror} in Section~\ref{sec:zeroerror}.
We prove Theorem~\ref{thm:det} in Section~\ref{sec:det}. We discuss further research and open
problems in Section~\ref{sec:open}.

\paragraph{An acknowledgement.}
We thank the Simons institute at Berkeley, where this work was performed, for their hospitality.

\section{Preliminaries}
\label{sec:prelim}

Let $H\subseteq\R^n$ be a finite set.
For every $x\in\R^n$, $\A{H}(x)$ denotes the function
\[\A{H}(x) \eqdef \bigl(\sign(\ip{x,h}): h \in H\bigr) \in \{-,0,+\}^H,\]
where $\sign:\R \to \{-,0,+\}$ is the sign function and $\ip{\cdot,\cdot}$ is the standard inner product in $\R^n$.
The following lemma is a variant of standard bounds
on the number of cells in a hyperplane arrangement.

\begin{lemma}\label{lem:cellcount}
Let $H \subset \R^n$ be a set of size $|H|=m$. Then
$\lvert\{\A{H}(x) : x\in\R^n\}\bigr\rvert \leq (2em)^n$.
\end{lemma}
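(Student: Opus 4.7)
The plan is to reduce the ternary sign-pattern count to a binary one and then invoke the Sauer-Shelah lemma. For each $h \in H$, observe that $\sign(\ip{h,x}) \in \{-,0,+\}$ is determined by the pair of bits $(\mathbf{1}[\ip{h,x} \ge 0],\mathbf{1}[\ip{-h,x} \ge 0])$: these pairs take the three achievable values $(1,0)$, $(1,1)$, $(0,1)$ corresponding to $+$, $0$, $-$ respectively (the pair $(0,0)$ is infeasible because at least one of $\ip{h,x}, \ip{-h,x}$ is nonnegative). Setting $\tilde H = H \cup (-H)$, so that $|\tilde H| \le 2m$, the map $\A{H}(x) \mapsto (\mathbf{1}[\ip{h,x} \ge 0])_{h \in \tilde H}$ is therefore injective, and it suffices to bound the number of distinct binary vectors on the right.

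Next I would apply the Sauer-Shelah lemma. The concept class of homogeneous closed half-spaces $\mathcal{C} = \{\{h \in \R^n : \ip{h,x} \ge 0\} : x \in \R^n\}$ has VC dimension at most $n$, via a standard Radon-type linear-dependence argument on any purported shattered subset of size $n+1$. Viewing the binary vectors above as traces of the concepts in $\mathcal{C}$ on the $\le 2m$ points of $\tilde H$, Sauer-Shelah yields the bound $\sum_{k=0}^n \binom{2m}{k}$ on the number of such traces.

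Finally, I would close with routine binomial estimates: for $n \le 2m$ one has $\sum_{k=0}^n \binom{2m}{k} \le (2em/n)^n \le (2em)^n$ using the standard $\sum_{k\le n}\binom{N}{k}\le(eN/n)^n$ bound and the fact that $n \ge 1$; and for $n > 2m$ one has $\sum_{k=0}^n \binom{2m}{k} = 2^{2m} \le 2^n \le (2em)^n$. Chaining the three steps together gives $|\{\A{H}(x) : x \in \R^n\}| \le (2em)^n$.

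No individual step is expected to pose a serious obstacle; the entire argument is a short chain of standard reductions. The two small points that merit a moment of care are (i) verifying injectivity of the $\{-,0,+\}\to\{0,1\}^2$ encoding, which rests on the infeasibility of the $(0,0)$ pair, and (ii) recalling that the VC dimension of homogeneous (rather than affine) half-spaces in $\R^n$ is exactly $n$. Both are classical, so I expect the proof to go through cleanly.
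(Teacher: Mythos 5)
Your proof is correct, but it takes a genuinely different route from the paper's. The paper argues directly on the hyperplane arrangement: it first chooses the set of $i \le n$ linearly independent hyperplanes on which $x$ lies (accounting for the zero signs), then applies the classical bound $\binom{m}{\le n}$ on the number of open cells to the remaining hyperplanes restricted to the resulting $(n-i)$-dimensional subspace, and sums; the resulting double sum collapses to $\binom{m}{\le n}2^n \le (2em)^n$. You instead eliminate the ternary alphabet up front by encoding $\sign(\ip{h,x})$ via the two bits $\mathbf{1}[\ip{h,x}\ge 0]$ and $\mathbf{1}[\ip{-h,x}\ge 0]$ over the doubled set $H\cup(-H)$, and then invoke Sauer--Shelah with the VC dimension $n$ of homogeneous closed half-spaces to get $\sum_{k\le n}\binom{2m}{k}\le(2em)^n$. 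Both arguments are sound and yield the identical constant; the trade-off is that your reduction handles the zero signs more cleanly (no case analysis over which hyperplanes contain $x$) at the price of doubling the ground set and importing Sauer--Shelah plus a VC-dimension computation, whereas the paper stays entirely within elementary cell-counting for arrangements. (Both proofs, yours and theirs, implicitly assume $m\ge 1$, since the stated bound is vacuously false for $m=0$; this is immaterial.)
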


\begin{proof}
It is well known that a set of $m$ hyperplanes partitions $\R^n$ to at most ${m \choose \le n}$ open cells.
The lemma follows by first choosing $i \le n$ linearly independent hyperplanes to which $x$ belongs, and then applying the above
bound to the remaining ones (restricted to a subspace of dimension $n-i$). Thus
\begin{align*}
\bigl\lvert\{\A{H}(x) : x\in\R^n\}\bigr\rvert 
&\leq 
\sum_{i=0}^{n}\binom{m}{i}\binom{m-i}{\leq n-i}
= 
\sum_{i=0}^n \sum_{j=0}^{n-i} {m \choose i} {m-i \choose j} \\
&= 
\sum_{s=0}^n \sum_{i=0}^s {m \choose s} {s \choose i} 
= 
\sum_{s=0}^n {m \choose s} 2^s 
\leq
{m \choose \le n} 2^n
\le (2em)^n,
\end{align*}
where the second equality follows from the identity ${m \choose i} {m-i \choose j} = {m \choose s} {s \choose i}$,
where $s=i+j$, and the last inequality follows from the well known upper bound 
${m \choose \le n} \leq (em/n)^n \leq (em)^n$.
\end{proof}

\section{Bounding the inference dimension}
\label{sec:infdim}

We prove Theorem~\ref{thm:infdim} in this section.

\restate{Theorem~\ref{thm:infdim}}{
The inference dimension of $H = \{ h\in \Z^n :\|h\|_1 \le w\}$ is $d=O(n \log w)$.
}

Let $S \subset \Z^n$ be such that $\|h\|_1 \le w$ for all $h \in S$.
We assume $|S|=d$ where $d$ is large enough to be determined later.
Fix $x \in \R^n$. We will show that there exists $h \in S$ such that $S \setminus \{h\}$ infers $h$ at $x$.

Partition $S$ into $\bigl\{S_b: b \in \{-,0,+\}\bigr\}$, where
\[
S_b \eqdef \{h \in S: \sign(\ip{h,x})=b\}.
\]

We will show that if $S$ is sufficiently large then
$S_b \setminus \{h\}$ infers $h$ at $x$ for some $s\in S_b$ and $b\in\{-,0,+\}$.
The simplest case is when $S_0$ is large:
\begin{claim}
If $|S_0|>n$ then there exists $h \in S_0$ such that $S_0 \setminus \{h\}$ infers $h$ at $x$.
In particular, $S \setminus \{h\}$ infers $h$ at $x$.
\end{claim}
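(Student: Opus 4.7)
The plan is a short linear-algebra argument. By definition of $S_0$, every $h' \in S_0$ satisfies $\langle h', x\rangle = 0$. Since $|S_0| > n$, the vectors in $S_0$ lie in $\R^n$ and must be linearly dependent; pick any nontrivial relation $\sum_{h' \in S_0} \beta_{h'} h' = 0$ and choose some $h \in S_0$ with $\beta_h \neq 0$. Then $h$ is a linear combination of the vectors in $S_0 \setminus \{h\}$, say $h = \sum_{h' \in S_0 \setminus \{h\}} \alpha_{h'} h'$.

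Now I would unwind the definition of inference. Let $S' := S_0 \setminus \{h\}$ and take any $x' \in P_{S'}(x)$. For every $h' \in S'$ we have $\sign(\langle h', x'\rangle) = \sign(\langle h', x\rangle) = 0$, i.e. $\langle h', x'\rangle = 0$. Consequently,
\[
\langle h, x'\rangle \;=\; \sum_{h' \in S'} \alpha_{h'} \langle h', x'\rangle \;=\; 0 \;=\; \langle h, x\rangle,
\]
so $\sign(\langle h, x'\rangle) = \sign(\langle h, x\rangle)$, which is precisely the statement that $S'$ infers $h$ at $x$.

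For the ``in particular'' clause, note that adding more constraints only shrinks the inference region: since $S' \subseteq S \setminus \{h\}$, we have $P_{S \setminus \{h\}}(x) \subseteq P_{S'}(x)$, so the constancy of $\sign(\langle h, \cdot\rangle)$ on $P_{S'}(x)$ immediately implies its constancy on $P_{S \setminus \{h\}}(x)$. There is no real obstacle here; the only thing to be slightly careful about is the quantification in the definition of inference (the region $P_{S'}(x)$ is defined in terms of both label and comparison queries on $S'$, but since all label queries on $S'$ already pin down $\langle h', x'\rangle = 0$, the comparison queries contribute nothing extra to the argument).
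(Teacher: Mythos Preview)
Your proof is correct and follows essentially the same linear-algebra argument as the paper: both exploit that $|S_0|>n$ forces a linear dependence, express the chosen $h$ as a combination of the remaining vectors, and use that all these vectors are orthogonal to $x$ (hence to any $x'$ in the inference region) to conclude $\sign(\ip{h,x'})=0$. Your write-up is slightly more explicit about the monotonicity step for the ``in particular'' clause, but the idea is identical.
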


\begin{proof}
Let $h_1,\ldots,h_{n+1} \in S_0$ be distinct elements such that $h_{n+1}$ belongs to the linear span of $h_1,\ldots,h_n$.
We claim that $\{h_1,\ldots,h_n\}$ infer $h_{n+1}$ at $x$.
More specifically, we claim that having
\begin{itemize}
\item[(i)] $\sign(\ip{h_{i},x})=0$ for $i\leq n$, and
\item[(ii)] $h_{n+1}\in\text{span}\{h_i:i\leq n\}$
\end{itemize}
imply that $\sign(\ip{h_{n+1},x})=0$.
Indeed, by (ii) there exist coefficients $\alpha_i$'s such that
$h_{n+1}=\sum_{i=1}^n\alpha_i h_i$,
and therefore, using (i), it follows that
$\ip{h_{n+1},x}= \ip{\sum_{i=1}^n\alpha_i h_i,x}=\sum_{i=1}^n \alpha_i\ip{h_i,x}=0$.
\end{proof}

Thus, we assume from now on that $|S_0| \le n$. We assume without loss of generality that $|S_{+}| \ge |S_{-}|$,
and show that there is some $h\in S_+$ such that $S_+\setminus\{h\}$ infers $h$ at $x$.
The other case is analogous.
Set $m=\lfloor (d-n)/2 \rfloor$ and let $h_1,\ldots,h_m \in S_+$ sorted by
\[
0 < \ip{h_1,x} \le \ldots \le \ip{h_m,x}.
\]
The idea is to show that some $h_i$ satisfies that $h_i-h_1$ is in the cone
spanned by the $h_k-h_l$ where $1\leq l \leq k < i$.
Then, a simple argument shows that $S_+\setminus\{h_i\}$ infers $h_i$ at $x$.
The existence of such an $h_i$ is derived by a counting argument
that boils down to the following lemma.

\begin{claim}
\label{claim:pigeonhole}
Assume that $2^{m-1}>(\tfrac{2e(2w+1)m}{n})^{n}$. Then there exist $\alpha_1,\ldots,\alpha_{m-1} \in \{-1,0,1\}$, not all zero, such that
$$
\sum_{i=1}^{m-1} \alpha_i (h_{i+1}-h_{i}) = 0.
$$
In particular, this holds for $m=O(n \log w)$ with a large enough constant.
\end{claim}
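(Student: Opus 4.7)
The natural approach is a pigeonhole argument over the $2^{m-1}$ zero/one subset sums of the difference vectors $g_i := h_{i+1}-h_i$. For each $T\subseteq[m-1]$, define $v_T := \sum_{i\in T}g_i \in \Z^n$. If two distinct subsets $T\neq T'$ produce the same sum $v_T=v_{T'}$, then setting $\alpha_i := \mathbf{1}_{i\in T} - \mathbf{1}_{i\in T'}$ yields a nonzero $\alpha\in\{-1,0,1\}^{m-1}$ with $\sum_i \alpha_i g_i = 0$, exactly the desired dependency. So it suffices to show that, under the hypothesis of the claim, the number of distinct $v_T$'s is strictly less than $2^{m-1}$.

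To count distinct $v_T$'s, I would bound each coordinate separately. For $k\in[n]$, $|(v_T)_k|\le \sum_{i=1}^{m-1}|(g_i)_k| =: D_k$, so the number of possible $v_T$'s is at most $\prod_{k=1}^n(2D_k+1)$. The key quantitative input is $\sum_k D_k = \sum_{i=1}^{m-1}\|g_i\|_1 \le 2w(m-1)\le 2wm$, which follows from $\|g_i\|_1\le\|h_{i+1}\|_1+\|h_i\|_1\le 2w$. The AM-GM inequality then gives
\[
\prod_{k=1}^n(2D_k+1)\le\Bigl(\frac{2\sum_k D_k+n}{n}\Bigr)^n\le\Bigl(\frac{4wm+n}{n}\Bigr)^n,
\]
which is in turn bounded by $(2e(2w+1)m/n)^n$ as soon as $m\ge n/(2e)$ (this is the range of $m$ where the additive $n$ can be absorbed into the leading factor, and it will be guaranteed in the regime of interest). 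Combined with pigeonhole, this establishes the first half of the claim.

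For the ``in particular'' clause, I would take base-two logarithms of the hypothesis $2^{m-1}>(2e(2w+1)m/n)^n$. Since $n\log_2(2e(2w+1)m/n)$ grows like $n\log_2 w + n\log_2(m/n) + O(n)$, choosing $m = Cn\log w$ (plus the lower-order $O(n\log n)$ term absorbed into the big-O) with a sufficiently large absolute constant $C$ satisfies the inequality, giving $m=O(n\log w)$.

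I do not see a conceptual obstacle: the argument is a textbook pigeonhole, and the only step that requires some care is the AM-GM bound together with the absorption of the additive $n$, which is what secures the sharp factor $1/n$ inside the parenthesis. Without this saving, the resulting inference-dimension estimate would be $O(n\log(nw))$ rather than $O(n\log w)$, so this is the step I would double-check most carefully.
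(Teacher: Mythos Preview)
Your proposal is correct and follows essentially the same pigeonhole argument as the paper: both map $\{0,1\}^{m-1}$ into $\Z^n$ via subset sums of the differences $h_{i+1}-h_i$ and bound the image size to force a collision. The only (minor) difference is in how the image is counted---the paper bounds the number of integer points in the $\ell_1$-ball of radius $2w(m-1)$ via a sign-pattern times stars-and-bars count, whereas you use a coordinatewise box together with AM--GM; both routes land on the same bound $(2e(2w+1)m/n)^n$.
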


\begin{proof}
For any $\beta \in \{0,1\}^{m-1}$ define $f(\beta) \eqdef \sum \beta_i (h_{i+1}-h_{i})$.
Note that $f(\beta) \in \Z^n$, and as since $\|h_i\|_1 \le w$ for all $i$,
it follows that $\|f(\beta)\|_1 \le 2w(m-1)$ by the triangle inequality.
Let $F \eqdef \{f(\beta): \beta \in \{0,1\}^{m-1}\}$.
Next, we bound $|F|$. We claim that
\[
|F| \le 2^n {2w(m-1)+n \choose n}.
\]
To see that, note that there are $2^n$ possible signs for each $f \in F$. The number of patterns for the absolute values
is at most the number of ways to express $2w(m-1)$ as the sum of $n+1$ nonnegative integers. Equivalently, it is the number
of ways of placing $2w(m-1)$ balls in $n+1$ bins, which is ${2w(m-1)+n \choose n}$. We further simplify
$$
|F| \le 2^n {2w(m-1)+n \choose n} \le 2^n {(2w+1) m \choose n} \le \left(\frac{2e(2w+1)m}{n} \right)^n.
$$
By our assumptions $2^{m-1} > |F|$. Thus by the pigeonhole principle there exist distinct $\beta',\beta''$ for which $f(\beta')=f(\beta'')$.
The claim follows for $\alpha=\beta' - \beta''$.
\end{proof}

We assume that $d=O(n \log w)$ with a large enough constant, so that the conditions of Claim~\ref{claim:pigeonhole} hold.
Let $\alpha_1,\ldots,\alpha_{m-1} \in \{-1,0,1\}$, not all zero, be such that $\sum \alpha_i (h_{i+1}-h_{i})=0$.
Let $1 \le p \le m-1$ be maximal such that $\alpha_p \ne 0$.
We may assume that $\alpha_p=-1$, as otherwise we can negate all of $\alpha_1,\ldots,\alpha_{m-1}$.

Adding $h_{p+1} - h_1 = \sum_{i=1}^{p} (h_{i+1} - h_{i})$ to $0=\sum \alpha_i (h_{i+1}-h_{i})$, we obtain that
$$
h_{p+1} - h_1 = \sum_{i=1}^{p} (\alpha_i+1)(h_{i+1} - h_{i}) = \sum_{i=1}^{p-1} (\alpha_i+1)(h_{i+1} - h_{i}),
$$
where the first equality holds as $\alpha_i=0$ if $i>p$, and the second equality holds as $\alpha_{p}=-1$.

We claim that $R=\{h_1,\ldots,h_p\}$ infers $h_{p+1}$ at $x$, which completes the proof.
More specifically, we claim that having
\begin{itemize}
\item[(i)] $0 < \ip{h_1,x} \le \ldots \le \ip{h_p,x}$,
\item[(ii)] $h_{p+1} - h_1  = \sum_{i=1}^{p-1} (\alpha_i+1)(h_{i+1} - h_{i})$,
where the coefficients $\alpha_{i}+1\geq 0$ for all $i$,
\end{itemize}
imply that $\sign(\ip{h_{p+1},x})\geq 0$.
Indeed, item (i) implies that $\ip{x,h_i-h_j}\geq 0$,
for every $1\leq j < i \leq p$,
and item (ii) implies that $h_{p+1}-h_1$
is in the cone spanned by $h_i - h_j$ for $1\leq j < i \leq p$.
Thus, also $\ip{x,h_{p+1}-h_1}\geq 0$, which implies,
by the left-most inequality of item (ii),
that $\ip{x,h_{p+1}}\geq\ip{x,h_{1}}> 0$,
as required.


\section{Zero-error randomized comparison decision tree}
\label{sec:zeroerror}

We prove Theorem~\ref{thm:zeroerror} in this section.

\restate{Theorem~\ref{thm:zeroerror}}{
Let $H \subset \R^n$ be a finite set with inference dimension $d$. Then there exists a zero-error randomized comparison decision tree which computes $\A{H}$,
whose expected query complexity is ${O\bigl((d +n\log d) \log |H|\bigr)}$.
}

We begin with the following claim. Recall that $\infer(S,x)$ is the set of $h \in \R^n$ which can be inferred from $S$ at $x$.

\begin{claim}
\label{claim:inf_many}
Let $S \subset \R^n$ with inference dimension $d$ and $|S|=d+m$.
Then for every $x \in \R^n$, there exist $h_1,\ldots,h_m \in S$ such that
$$
h_i \in \infer(S \setminus \{h_i\}, x).
$$
\end{claim}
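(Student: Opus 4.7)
My plan is to construct $h_1,\ldots,h_m$ one at a time by a greedy peeling argument driven by the inference-dimension hypothesis. The one preliminary observation I need is that inference is monotone in its first argument: if $T \subseteq T' \subseteq \R^n$ and $T$ infers $h$ at $x$, then $T'$ also infers $h$ at $x$. This follows immediately by unpacking the definitions, since $T \subseteq T'$ gives $T \cup (T-T) \subseteq T' \cup (T'-T')$, which in turn forces $P_{T'}(x) \subseteq P_T(x)$; hence any sign relation that is constant on $P_T(x)$ is in particular constant on $P_{T'}(x)$.

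With monotonicity in hand the construction is immediate. Set $S_1 \eqdef S$, and for $i = 1,2,\ldots,m$, invoke the inference-dimension assumption on $S_i$: since $|S_i| = d+m-i+1 \geq d+1 \geq d$, there exists some $h_i \in S_i$ such that $S_i \setminus \{h_i\}$ infers $h_i$ at $x$. Then set $S_{i+1} \eqdef S_i \setminus \{h_i\}$. By construction $h_1,\ldots,h_i$ are distinct elements of $S$, so $S_i \setminus \{h_i\} = S_{i+1} \subseteq S \setminus \{h_i\}$. Monotonicity then upgrades the inference to $h_i \in \infer(S \setminus \{h_i\},x)$, which is exactly the claimed property.

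The substantive content is really just the monotonicity observation, after which the rest is bookkeeping: the working set is only shrunk by one element per step, so after $m$ peeling steps it has still not dropped below the inference-dimension threshold $d$. I therefore do not anticipate any genuine obstacle; the author's proof should follow essentially this same two-line outline.
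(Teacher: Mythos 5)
Your proof is correct and follows essentially the same greedy peeling argument as the paper, which also iteratively removes an inferred element and uses the monotonicity of inference under enlarging the inferring set (stated implicitly there as ``as $S_i \subset S$ then also $h_i \in \infer(S\setminus\{h_i\},x)$''). No substantive differences.
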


\begin{proof}
We apply the definition of inference dimension iteratively. Fix $x \in \R^n$.
Assume that we constructed $h_1,\ldots,h_{i-1}$ so far for $i \le m$. Let
$S_i=S \setminus \{h_1,\ldots,h_{i-1}\}$. As $|S_i| \ge d$ there exist $h_i \in S_i$ such that $S_i \setminus \{h_i\}$
infers $h_{i}$ at $x$. That is, $h_{i} \in \infer(S_{i} \setminus \{h_i\},x)$. But as $S_{i} \subset S$ then also
$h_{i} \in \infer(S \setminus \{h_i\},x)$.
\end{proof}

\begin{lemma}
\label{lemma:inf_half}
Let $H \subset \R^n$ be a finite set with inference dimension $d$.
Let $S \subset H$ be a uniformly chosen subset of size $|S|=2d$. Then for every $x \in \R^n$,
\[
\E_S\bigl[|\infer(S,x) \cap H| \bigr] \ge \frac{|H|}{2}.
\]
\end{lemma}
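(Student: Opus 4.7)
The plan is to apply Claim~\ref{claim:inf_many} to a random subset of $H$ that is one element larger than $S$, and then exploit symmetry to convert the resulting guarantee on ``many inferable elements inside $T$'' into the desired expectation bound on $|\infer(S,x)\cap H|$.

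First, I would re-sample $S$ through the following coupling: draw $T\subseteq H$ uniformly from subsets of size $2d+1$, then pick $h^*\in T$ uniformly, and set $S := T\setminus\{h^*\}$. By exchangeability, the marginal of $S$ is uniform over $2d$-subsets of $H$, while conditional on $S$ the element $h^*$ is uniform over $H\setminus S$. This is the standard device that lets us turn a statement about ``a distinguished element of a random $(2d+1)$-set'' into one about a uniformly random $2d$-set $S$.

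Next, I would apply Claim~\ref{claim:inf_many} to $T$ with parameter $m=d+1$, producing at least $d+1$ distinct elements of $T$, each inferred at $x$ by the other $2d$ elements of $T$. Since $h^*$ is uniform in $T$, the probability that $h^*$ lands in this ``good'' subset is at least $(d+1)/(2d+1)\geq 1/2$; whenever this happens, $h^*\in\infer(T\setminus\{h^*\},x)=\infer(S,x)$, and $h^*\in H$ always. Unpacking the joint distribution of $(S,h^*)$,
\[
\tfrac{1}{2} \;\leq\; \Pr\bigl[h^*\in\infer(S,x)\bigr] \;=\; \E_S\!\left[\frac{|\infer(S,x)\cap(H\setminus S)|}{|H|-2d}\right],
\]
so $\E_S\bigl[|\infer(S,x)\cap(H\setminus S)|\bigr]\geq (|H|-2d)/2$. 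Finally, I would observe that $S\subseteq\infer(S,x)$ trivially, since each $h\in S$ is resolved by the label query $\sign(\ip{h,x})$ on itself; adding $|S|=2d$ then gives $\E_S[|\infer(S,x)\cap H|]\geq 2d+(|H|-2d)/2 = |H|/2+d \geq |H|/2$.

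There is no genuine obstacle here: once the ``$T=S\cup\{h^*\}$'' coupling is in place, Claim~\ref{claim:inf_many} delivers the core probabilistic content and the rest is bookkeeping. The only point that warrants a careful sentence is the symmetry claim that this coupling produces a uniform $S$ of size $2d$ together with a uniform $h^*\in H\setminus S$, but this is a standard exchangeability argument.
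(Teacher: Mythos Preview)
Your proof is correct and follows essentially the same argument as the paper: both sample a uniform $(2d{+}1)$-subset, apply Claim~\ref{claim:inf_many} to obtain at least $d{+}1$ self-inferable elements, and use exchangeability to conclude that a uniformly chosen element of the subset is inferable with probability at least $(d{+}1)/(2d{+}1)\ge 1/2$. The only cosmetic difference is that you separate out $S\subseteq\infer(S,x)$ explicitly at the end (yielding the slightly sharper $|H|/2+d$), whereas the paper absorbs this into a single inequality $\Pr_{S,\,h\in H}[h\in\infer(S,x)]\ge \Pr_{S,\,h\in H\setminus S}[h\in\infer(S,x)]$.
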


\begin{proof}
Fix $x \in \R^n$. We have
\begin{align*}
\E_S\left[\frac{|\infer(S,x) \cap H|}{|H|} \right] &=
\Pr_{S \subset H, h \in H}[h \in \infer(S,x)] \\
&\ge \Pr_{S \subset H, h \in H \setminus S}[h \in \infer(S,x)] \\
&= \Pr[h_{2d+1} \in \infer(\{h_1,\ldots,h_{2d}\},x)],
\end{align*}
where $h_1,\ldots,h_{2d+1} \in H$ are uniformly chosen distinct elements. The inequality
``$\Pr_{S \subset H, h \in H}[h \in \infer(S,x)] \ge \Pr_{S \subset H, h \in H \setminus S}[h \in \infer(S,x)]$"
follows as $h \in \infer(S,x)$ for any $h \in S$.

Let $R \eqdef \{h_1,\ldots,h_{2d+1}\}$. By symmetry it holds that
\begin{align*}
\Pr[h_{2d+1} \in \infer(\{h_1,\ldots,h_{2d}\},x)] &= \frac{1}{2d+1} \sum_{i=1}^{2d+1} \Pr_R[h_i \in \infer(R \setminus \{h_i\},x)]\\
&= \E_R \left[\frac{|\{h_i \in R: h_i \in \infer(R \setminus \{h_i\},x)\}|}{2d+1} \right].
\end{align*}
By Claim~\ref{claim:inf_many}, for any $R \subset H$ it holds that $|\{h_i \in R: h_i \in \infer(R \setminus \{h_i\},x)\}| \ge |R|-d$.
Thus,
$$
\E_S\left[\frac{|\infer(S,x) \cap H|}{|H|} \right] \ge \frac{d+1}{2d+1} \ge \frac{1}{2}.
$$
\end{proof}

We are now in position to describe the algorithm which establishes Theorem~\ref{thm:zeroerror}.

\begin{tcolorbox}
\begin{center}
{\bf Zero-error randomized comparison decision tree for $\A{H}$}\\
\end{center}
\noindent
Input: $x \in \R^n$\\
Output: $\A{H}(x)$
\begin{enumerate}
\item[(1)] Initialize: $H_0=H$, $i=0$, $v(h)=?$ for all $h \in H$.
\item[(2)] Repeat while $|H_i| \ge 2d$:
\begin{enumerate}
\item[(2.1)] Sample uniformly $S_i \subset H_i$ of size $|S_i|=2d$.
\item[(2.2)] Query $\sign(\ip{h,x})$ for $h\in S_i$ and sort the $\ip{h,x}$ using comparison queries.
\item[(2.3)] Compute $\infer(S_i,x) \cap H_i$.
\item[(2.4)] For all $h \in \infer(S_i,x) \cap H_i$, set $v(h) \in \{-,0,+\}$ to be the inferred value of $h$ at $x$.
\item[(2.5)] Set $H_{i+1} := H_i \setminus (\infer(S_i,x) \cap H_i)$.
\item[(2.6)] Set $i := i+1$.
\end{enumerate}
\item[(3)] Query $\sign(\ip{h,x})$ for all $h \in H_i$, and set $v(h)$ accordingly.
\item[(4)] Return $v$ as the value of $\A{H}(x)$.
\end{enumerate}
\end{tcolorbox}

\paragraph{Analysis.}
In order to establish Theorem~\ref{thm:zeroerror}, we first show that for every $x \in \R^n$, the algorithm
terminates after $O(\log |H|)$ iterations in expectation. This follows as $\E[|H_i|] \le 2^{-i} |H|$,
which we show by induction on $i$. It clearly holds for $i=0$. For $i>0$
by Lemma~\ref{lemma:inf_half}, if we condition on $H_{i-1}$ then
$$
\E_{S_i}[|H_{i}| \; | \; H_{i-1}] \le \frac{|H_{i-1}|}{2}.
$$
and hence
$$
\E[|H_{i}|] = \E_{H_{i-1}}[\E_{S_{i}}[|H_{i}| \; | \; H_{i-1}]] \le \E\left[\frac{|H_{i-1}|}{2}\right] \le 2^{-i} |H|.
$$
Thus, it remains to bound the number of queries in every round.
Observe that the only queries to $x$ are in steps (2.2) and (3). In step (3) the algorithm makes at most $2d$
label queries. In step (2.2), we need to compute $\sign(\ip{x,h})$ for all $h \in S_i$,
which requires $|S_i|=2d$ label queries; and to compute $\sign(\ip{x,h'-h''})$ for all $h',h'' \in S_i$.
This can be done in $O(d \log d)$ comparison queries by sorting the elements $\{\ip{x,h}: h \in S_i\}$
giving some $O(d\log d\log|H|)$  bound on the expected total number of queries.

This bound can be improved using Fredman's sorting algorithm~\cite{fredman1976good}.

\begin{theorem}[\cite{fredman1976good}]
Let $\Pi$ be a family of orderings over a set of $m$ elements.
Then there exists a comparison decision tree
that sorts every $\pi\in\Pi$ using at most
\[2m + \log\lvert\Pi\rvert\]
comparisons.
\end{theorem}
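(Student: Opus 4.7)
The plan is to give a proof by insertion sort guided by $\Pi$. I process the $m$ elements one by one, maintaining the sorted prefix alongside the subfamily of $\Pi$ consistent with all comparisons made so far. For each new element, I perform a weighted binary search to locate its insertion position, with weights given by the sizes of the consistent subfamilies.

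First, fix any enumeration $e_1,\ldots,e_m$ of the elements and, after processing $e_1,\ldots,e_k$, let $\Pi^{(k)}\subseteq \Pi$ denote the subfamily of orderings consistent with every comparison made so far. Clearly $\Pi^{(0)}=\Pi$ and $|\Pi^{(m)}|\ge 1$ (the true ordering is always in $\Pi^{(k)}$). When inserting $e_{k+1}$ into the sorted list of $e_1,\ldots,e_k$, there are $k+1$ possible positions $p=0,1,\ldots,k$; let $n_p$ be the number of orderings in $\Pi^{(k)}$ consistent with placing $e_{k+1}$ at position $p$, so $\sum_p n_p=|\Pi^{(k)}|$. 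Once the correct position $p^*$ is identified, the updated family has size $|\Pi^{(k+1)}|=n_{p^*}$.

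Second, I realize the insertion via a weighted binary search. Each query ``is $e_{k+1}<e_j$?'' splits the current candidate interval of positions at the location of $e_j$ in the sorted prefix, so I am free to split at any position boundary. I claim that such a search, always splitting so that each side carries at most half of the current interval's weight, identifies $p^*$ in at most $\log_2(|\Pi^{(k)}|/n_{p^*})+2=\log_2(|\Pi^{(k)}|/|\Pi^{(k+1)}|)+2$ comparisons. The cleanest way to see this is to interpret the search as locating the length-$n_{p^*}$ sub-interval $[W_{p^*-1},W_{p^*}]$ (where $W_p=\sum_{q\le p}n_q$) inside $[0,|\Pi^{(k)}|]$, reducing the problem to standard binary search on the cumulative weight axis.

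Third, summing the per-insertion bounds telescopes to the target:
\[
\sum_{k=0}^{m-1}\left(\log_2\frac{|\Pi^{(k)}|}{|\Pi^{(k+1)}|}+2\right)=\log_2\frac{|\Pi^{(0)}|}{|\Pi^{(m)}|}+2m\le \log_2|\Pi|+2m.
\]
The main technical point is the per-insertion bound. A naive position-based binary search gives only $\log_2(k+1)$ queries for the $k$-th insertion, summing to $\Theta(m\log m)$, which would be useless. The improvement comes from biasing the search toward positions carrying more weight (equivalently, from reformulating it as binary search on the cumulative weight axis), so that each insertion ``pays'' only in proportion to how much it shrinks $\Pi^{(k)}$; the telescoping then absorbs all of the $\log|\Pi|$ content across the $m$ insertions, leaving only the additive $2m$ overhead.
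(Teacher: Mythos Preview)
The paper does not prove this theorem; it merely quotes it from Fredman's 1976 paper and uses it as a black box. Your argument is essentially Fredman's original proof: insertion sort where each insertion is performed by a weighted binary search, with the per-insertion cost $\log_2(|\Pi^{(k)}|/|\Pi^{(k+1)}|)+O(1)$ telescoping to $\log_2|\Pi|+2m$. This is correct.

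One small wording issue: ``always splitting so that each side carries at most half of the current interval's weight'' is not literally achievable when a single position carries more than half the remaining weight. Your subsequent reformulation on the cumulative-weight axis is the right fix: snapping the bisection point to the nearest boundary $W_j$ yields an alphabetic search tree in which leaf $p$ has depth at most $\lceil\log_2(|\Pi^{(k)}|/n_p)\rceil+1\le \log_2(|\Pi^{(k)}|/n_p)+2$ (this is the Gilbert--Moore bound), which is exactly the per-insertion cost you use.
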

To use Fredman's algorithm, observe that the ordering, ``$\prec$'',
on $S_i$ that is being sorted in the $i$'th round is defined by the inner product with $x$,
\[h' \prec h'' \iff \ip{h',x}\leq\ip{h'',x}.\]
The following claim bounds the number of such orderings.

\begin{claim}
Let $S \subset \R^n$. Let $\Pi_{S,x}$ be the ordering on $S$ define by inner product with $x \in \R^n$. Then
\[|\{\Pi_{S,x} : x\in\R^n\}| \leq (2e |S|^2)^{n}.\]
\end{claim}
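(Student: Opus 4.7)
The plan is to observe that the ordering $\Pi_{S,x}$ is completely determined by the sign pattern $\A{S-S}(x)$, where $S-S = \{h'-h'' : h',h'' \in S\}$, and then to apply the cell-count bound of Lemma~\ref{lem:cellcount} to the set $S-S$.

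More concretely, first I would note that for any $h',h'' \in S$, the relative order of $h'$ and $h''$ in $\Pi_{S,x}$ is exactly $\sign(\ip{h'-h'',x})$: we have $h' \prec h''$ iff $\ip{h'-h'',x} \leq 0$, and the equality/strict inequality distinction is also captured by the sign. Consequently the map $x \mapsto \Pi_{S,x}$ factors through the map $x \mapsto \A{S-S}(x)$, which immediately gives
\[
|\{\Pi_{S,x} : x \in \R^n\}| \le |\{\A{S-S}(x) : x \in \R^n\}|.
\]

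Next, $|S-S| \le |S|^2$ (the differences $h'-h''$ for $(h',h'') \in S \times S$ count each element of $S-S$ at least once). Applying Lemma~\ref{lem:cellcount} to the set $S-S \subset \R^n$ of size at most $|S|^2$ yields
\[
|\{\A{S-S}(x) : x \in \R^n\}| \le (2e|S-S|)^n \le (2e|S|^2)^n,
\]
which establishes the desired bound. There is no real obstacle here; the only thing to be careful about is confirming that ties in the ordering are handled correctly, which is exactly why the $\{-,0,+\}$-valued sign pattern (rather than a strict $\pm 1$ pattern) is the right object to quotient through.
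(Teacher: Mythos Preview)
Your proposal is correct and follows essentially the same argument as the paper: both observe that $\Pi_{S,x}$ is determined by $\A{S-S}(x)$, bound $|S-S|\le |S|^2$, and invoke Lemma~\ref{lem:cellcount}. The only cosmetic difference is that the paper phrases the first step contrapositively (two orderings differ only if some sign in $\A{S-S}$ differs), whereas you state directly that the ordering factors through the sign pattern.
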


\begin{proof}
Observe that $\Pi_{S,x'}\neq\Pi_{S,x''}$  if and only if
there are $h',h''\in S$ such that $\sign(\ip{h'-h'',x'})\neq\sign(\ip{h'-h'',x''})$.
Thus, the number of different orderings is at most the size of
$\{\A{S-S}(x) : x\in\R^n\}$, where $S-S=\{h'-h'' : h',h''\in S\}$.
Since $|S-S|\leq |S|^2$, Lemma~\ref{lem:cellcount} implies an upper bound of $(2e |S|^2)^{n}$
as claimed.
\end{proof}

Thus, by using Fredman's algorithm we can sort $S_i$ with just
${O(|S_i| + n\log|S_i|)}={O(d+n \log d)}$ comparisons in each round,
which gives a total number of
\[
O((d+n\log d)\log|H|)
\]
queries in total.

\section{Deterministic comparison decision tree}
\label{sec:det}

We prove Theorem~\ref{thm:det} in this section, which is a de-randomization of Theorem~\ref{thm:zeroerror}.

\restate{Theorem~\ref{thm:det}}{
Let $H \subset \R^n$ with inference dimension $d$. Then there exists a deterministic comparison decision tree which computes $\A{H}$,
whose query complexity is ${O((d + n\log (nd)) \log |H|)}$.
}

First, note the following straightforward Corollary of Lemma~\ref{lemma:inf_half}.

\begin{corollary}
\label{cor:inf_half}
Let $H \subset \R^n$ be a finite set with inference dimension $d$. Let $S \subset H$ be uniformly chosen
of size $|S|=2d$. Then
\[
\bigl(\forall x \in \R^n\bigr): \; \Pr_S \left[ |\infer(S,x) \cap H| \ge \frac{|H|}{4} \right] \ge \frac{1}{4}.
\]
\end{corollary}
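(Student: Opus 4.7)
The plan is to derive this corollary from Lemma~\ref{lemma:inf_half} by a standard reverse-Markov (also known as Paley--Zygmund-style) argument. Fix $x \in \R^n$, and consider the random variable
\[
Y \eqdef \frac{|\infer(S,x) \cap H|}{|H|},
\]
where $S \subset H$ is uniformly chosen of size $2d$. The two key structural facts about $Y$ are that $Y \in [0,1]$ (since $\infer(S,x) \cap H \subseteq H$), and $\E_S[Y] \ge 1/2$ by Lemma~\ref{lemma:inf_half}.

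Next I would apply Markov's inequality to the nonnegative random variable $1 - Y$. Let $p \eqdef \Pr_S[Y \ge 1/4]$. Splitting the expectation into the two events $\{Y \ge 1/4\}$ and $\{Y < 1/4\}$, and bounding $Y$ by $1$ in the first and by $1/4$ in the second, yields
\[
\tfrac{1}{2} \;\le\; \E_S[Y] \;\le\; p \cdot 1 + (1-p) \cdot \tfrac{1}{4} \;=\; \tfrac{1}{4} + \tfrac{3}{4} p.
\]
Rearranging gives $p \ge 1/3 \ge 1/4$, which is exactly the claimed inequality (with some slack).

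There is essentially no obstacle here; the corollary is a one-line consequence of the expectation bound from Lemma~\ref{lemma:inf_half} combined with the trivial upper bound $|\infer(S,x) \cap H| \le |H|$. The only thing one has to be careful about is to remember that the inference set intersected with $H$ is bounded by $|H|$ (so that $Y \le 1$), which is what makes reverse-Markov applicable. The threshold $1/4$ in the statement is chosen loosely so that this reverse-Markov argument goes through with constant probability, and the specific constants $1/4$ and $1/4$ are presumably what downstream arguments in Section~\ref{sec:det} will use to control the doubling/sampling argument; tighter constants (e.g.\ probability $1/3$ of covering at least a $1/4$ fraction) are available but not needed.
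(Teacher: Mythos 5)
Your proof is correct and is exactly the standard reverse-Markov argument the paper has in mind: the paper states this corollary without proof, calling it a ``straightforward Corollary'' of Lemma~\ref{lemma:inf_half}, and your derivation (using $Y\in[0,1]$ together with $\E_S[Y]\ge 1/2$ to get $\Pr[Y\ge 1/4]\ge 1/3\ge 1/4$) is the intended one-line justification.
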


Theorem~\ref{thm:det} follows by establishing a universal set $S$ which is good for all $x \in \R^n$.

\begin{lemma}
\label{lemma:inf_half_uni}
Let $H \subset \R^n$ be a finite set with inference dimension $d$.
Then there exists $S\subseteq H$ of size $|S|=O(d + n\log d)$ such that:
\[
\bigl(\forall x \in \R^n\bigr): \; |\infer(S,x) \cap H| \ge \frac{|H|}{8}.
\]
\end{lemma}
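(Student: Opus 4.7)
My plan is to use a double-sampling argument in the style of the Haussler--Welzl $\epsilon$-net theorem. Fix $N = C(d + n\log d)$ for a sufficiently large absolute constant $C$ and draw $S\subseteq H$ uniformly at random of size $N$; the goal is to show that the failure probability
\[
\Pr\bigl[\exists x\in\R^n\colon |\infer(S,x)\cap H|<|H|/8\bigr]
\]
is strictly less than $1$, which then yields the required $S$ via the probabilistic method.

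I would first establish a per-$x$ tail bound: for each fixed $x$, by decomposing $S$ into $\Theta(N/d)$ independent size-$2d$ subsamples and applying Corollary~\ref{cor:inf_half} to each (together with the monotonicity of inference in its first argument, i.e., $\infer(S_0,x)\subseteq \infer(S,x)$ whenever $S_0\subseteq S$), one obtains $\Pr_S[|\infer(S,x)\cap H|<|H|/4]\le 1/2$ provided $N$ is a sufficient constant multiple of $d$. Next, I would symmetrize by introducing an independent ghost sample $S'\subseteq H$ of size $N$; the per-$x$ tail bound then gives
\[
\Pr[\exists x\colon |\infer(S,x)\cap H|<|H|/8]
\le 2\Pr\bigl[\exists x\colon |\infer(S,x)\cap H|<|H|/8\text{ and }|\infer(S',x)\cap H|\ge |H|/4\bigr].
\]

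Now I would condition on $T=S\cup S'$ (of size essentially $2N$) and treat the remaining randomness as the uniform equal-split of $T$ into $(S,S')$. Crucially, the event on the right depends on $x$ only through $\A{T\cup(T-T)}(x)$, so Lemma~\ref{lem:cellcount} bounds the number of ``effective'' $x$'s by $(2e|T\cup(T-T)|)^n = O(N^{2n})$ --- this is the key reduction of double-sampling, in place of the much larger count $O(|H|^{2n})$ from a naive union bound over $H\cup(H-H)$. For each such effective $x$, the ghost-side constraint forces $|\infer(T,x)\cap H|\ge |H|/4$; I would then show that over the uniformly random equal-split of $T$, the event $|\infer(S,x)\cap H|<|H|/8$ occurs with probability at most $\exp(-\Omega(N))$, via a concentration argument that uses the inference dimension bound on $T$ (e.g.\ Claim~\ref{claim:inf_many}, giving that in any subset of $T$ of size $\ge d$ at most $d$ elements fail to be inferred from the rest) to control the sensitivity of $|\infer(S,x)\cap H|$ as a function of the split.

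A union bound then gives $\Pr[\text{bad}]\le 2\cdot O(N^{2n})\cdot \exp(-\Omega(N))$, which drops below $1$ once $N=\Omega(d+n\log d)$, completing the argument. The main obstacle is the final concentration step: obtaining an $\exp(-\Omega(N))$ rate (rather than the weaker $\exp(-\Omega(N/d))$ that falls out of a naive block decomposition of the partition), since only the sharper exponent is compatible with the $O(N^{2n})$-term union bound while maintaining the target sample size $O(d+n\log d)$.
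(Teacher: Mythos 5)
Your overall strategy (probabilistic method, symmetrization against a second sample, cell-counting via Lemma~\ref{lem:cellcount} to make the union bound finite, then a per-cell concentration bound) is the right family of ideas, and your per-$x$ tail bound via disjoint $2d$-blocks and Corollary~\ref{cor:inf_half} is fine. But the step you yourself flag as ``the main obstacle'' is a genuine gap, not a technicality. After conditioning on $T=S\cup S'$ and randomizing the split, the quantity you need to concentrate is $|\infer(S,x)\cap H|$, where the randomness enters through the \emph{first} argument of $\infer$. This is not a sum of indicators over the sampled elements; it is a highly nonlinear, merely monotone function of $S$, and knowing $|\infer(T,x)\cap H|\ge |H|/4$ does not obviously prevent a random half of $T$ from inferring far less (Claim~\ref{claim:inf_many} controls how many elements of $S$ are inferred by the rest of $S$, not how much of $H$ is inferred by $S$). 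No standard Chernoff/Hoeffding-for-splits argument applies, and you do not supply a substitute. There is also a secondary quantitative issue: with $|T|=2N$ the cell count is $(2e|T\cup(T-T)|)^n=2^{O(n\log N)}=2^{O(n\log(nd))}$, which forces $N=\Omega(n\log(nd))$; your setup therefore yields at best $|S|=O(d+n\log(nd))$ rather than the stated $O(d+n\log d)$ (this weaker bound would still suffice for Theorem~\ref{thm:det}, but not for the lemma as stated).

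The paper's proof sidesteps exactly this difficulty by inverting the roles of the two samples. Instead of a ghost sample of the same size, it draws $T\subset S$ uniformly of size only $2d$ and defines the event $E(S,T)$ by the two conditions $|\infer(T,x)\cap H|<|H|/8$ and $|\infer(T,x)\cap S|\ge|S|/4$. Claim~\ref{claim:ST} shows $\Pr[E(S)]\le 4\Pr[E(S,T)]$ by applying Corollary~\ref{cor:inf_half} to $T$ as a random $2d$-subset of $S$ (with $S$ playing the role of $H$). Now, conditioned on $T$, two things become easy simultaneously: (i) the union bound is over the cells of $T\cup(T-T)$, a set of size $O(d^2)$, giving only $2^{O(n\log d)}$ cases (this is what saves the $n\log n$ term); and (ii) for each bad cell, $\infer(T,x)\cap H$ is a \emph{fixed} set of density less than $1/8$, and $|\infer(T,x)\cap S|$ is just the number of the $s-2d$ fresh uniform samples landing in it, so a plain Chernoff bound gives $2^{-\Omega(s)}$. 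In short: the paper concentrates ``fixed inference set intersected with the random sample,'' whereas you set up ``random inference set intersected with the fixed $H$,'' and only the former admits the exponential tail your union bound needs. To repair your proof you would need to reformulate the bad event in terms of $\infer(T,x)\cap S$ for a small $T$, which is essentially the paper's argument.
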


We first argue that Theorem~\ref{thm:det} follows directly from the existence of such an $S$.
The algorithm is a straightforward
adaptation of the zero-error randomized comparison algorithm,
except that now we use this set $S$ which works for all $x \in \R^n$ in parallel.

\begin{tcolorbox}
\begin{center}
{\bf Deterministic comparison decision tree for $\A{H}$}\\
\end{center}
\noindent
Input: $x \in \R^n$\\
Output: $\A{H}(x)$
\begin{enumerate}
\item[(1)] Initialize: $H_0=H$, $i=0$, $v(h)=?$ for all $h \in H$. Let $s=O(d+n \log d)$ as in Lemma~\ref{lemma:inf_half_uni}.
\item[(2)] Repeat while $|H_i| \ge s$:
\begin{enumerate}
\item[(2.1)] Pick $S_i \subset H_i$ of size $|S_i| = s$ such that
\[
\forall x \in \R^n, \; |\infer(S_i,x) \cap H| \ge \frac{|H|}{8}.
\]
\item[(2.2)] Query $\sign(\ip{h,x})$ for $h\in S_i$ and sort the $\ip{h,x}$ using comparison queries.
\item[(2.3)] Compute $\infer(S_i,x) \cap H_i$.
\item[(2.4)] For all $h \in \infer(S_i,x) \cap H_i$, set $v(h) \in \{-,0,+\}$ to be the inferred value of $h$ at $x$.
\item[(2.5)] Set $H_{i+1} := H_i \setminus (\infer(S_i,x) \cap H_i)$.
\item[(2.6)] Set $i := i+1$.
\end{enumerate}
\item[(3)] Query $\sign(\ip{h,x})$ for all $h \in H_i$, and set $v(h)$ accordingly.
\item[(4)] Return $v$ as the value of $\A{H}(x)$.
\end{enumerate}
\end{tcolorbox}

\paragraph{Analysis.}
Lemma~\ref{lemma:inf_half_uni} ensures that a set $S_i$ always exist.
Thus, for any $x$, the algorithm terminates after $O(\log |H|)$ rounds.
Observe that the only queries to $x$ are in steps (2.2) and (3).
In step (3) the algorithm makes at most $s = O(d + n \log d)$ label queries. In step (2.2), we need to compute
$\sign(\ip{x,h})$ for all $h \in S_i$,
and to compute $\sign(\ip{x,h'-h''})$ for all $h',h'' \in S_i$,
which can be done sorting the elements $\{\ip{x,h}: h \in S_i\}$.
Using Fredman's algorithm, this requires
$O(|S_i| + n\log|S_i|)=O(d+n \log(dn))$ many comparisons in each round,
which gives a total number of
\[
O((d + n\log (dn)) \log |H|)
\]
queries.
\subsection{Proof of Lemma~\ref{lemma:inf_half_uni}}
Let $S \subset H$ be a uniform subset of size $|S|=s$ where $s=O(d+n \log d)$.
Define the event
\[
E(S) \eqdef \left[ \exists x \in \R^n, |\infer(S,x) \cap H| < \frac{|H|}{8} \right].
\]
It suffices to prove that $\Pr[E(S)] < 1$ to prove the existence of $S$.
In fact, as we will see, by choosing sufficiently large constants in the choice of $s=O(d+n\log d)$,
the probability $\Pr[E(S)]$ can be made $\le 1/2$ (say), so a random set would also work.

In order to establish that  $E(S) < 1$ we use a variant of
the \emph{double sampling method}~\cite{vapnik1971uniform} (see also~\cite{vapnik2015uniform}).
Let $T \subset S$ be a uniformly chosen subset of size $|T|=2d$.
Define the event
\[
E(S,T) \eqdef \left[ \exists x \in \R^n,\;
|\infer(T,x) \cap H| < \frac{|H|}{8} \; \bigwedge \;
|\infer(T,x) \cap S| \ge \frac{|S|}{4} \right].
\]
We bound  $\Pr(E(S))$ in two steps.
We first show that
(i) $\Pr[E(S)] \le 4 \Pr[E(S,T)]$,
and then that (ii) $\Pr[E(S,T)] \le \frac{1}{8}$.

\begin{claim}
\label{claim:ST}
$\Pr[E(S)] \le 4 \Pr[E(S,T)]$.
\end{claim}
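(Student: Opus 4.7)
The plan is to couple $S$ and $T$ in the natural way ($T$ is a uniform size-$2d$ subset of the random set $S$) and show that conditional on $E(S)$, the event $E(S,T)$ holds with probability at least $1/4$ over the choice of $T$. This immediately yields $\Pr[E(S,T)] \ge \Pr[E(S)] \cdot \tfrac{1}{4}$, which is the claim.

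First I would record two simple monotonicity facts. (a) Since $T \subseteq S$, any $h$ inferred from $T$ at a point $x$ is also inferred from $S$ at $x$ (the answers to label/comparison queries on $T$ are a sub-tuple of those on $S$), so $\infer(T,x) \subseteq \infer(S,x)$. (b) Any subset $S \subseteq H$ inherits the inference-dimension bound: the definition quantifies over all subsets, so $S$ itself has inference dimension at most~$d$.

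Next, condition on an outcome of $S$ lying in $E(S)$, and pick any witness $x_S \in \R^n$ with $|\infer(S,x_S) \cap H| < |H|/8$. By monotonicity (a), $|\infer(T,x_S) \cap H| \le |\infer(S,x_S) \cap H| < |H|/8$ \emph{deterministically}, so the first conjunct in the definition of $E(S,T)$ is satisfied by $x_S$ regardless of $T$. For the second conjunct, apply Corollary~\ref{cor:inf_half} with the universe $S$ in place of $H$ (this is legitimate by fact (b)) and with the fixed point $x_S$: a uniformly random size-$2d$ subset $T \subseteq S$ satisfies
\[
\Pr_T\!\left[|\infer(T,x_S) \cap S| \ge \tfrac{|S|}{4}\right] \ge \tfrac{1}{4}.
\]
Whenever this event occurs, $x_S$ witnesses $E(S,T)$, so $\Pr_T[E(S,T) \mid S] \ge 1/4$ on the event $E(S)$. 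Taking the expectation over $S$ gives $\Pr[E(S,T)] \ge \Pr[E(S) \wedge E(S,T)] \ge \tfrac{1}{4}\Pr[E(S)]$, as required.

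I do not expect any real obstacle; the only subtle point is justifying that Corollary~\ref{cor:inf_half} may be invoked with $S$ replacing $H$, which is handled by the trivial observation (b) that inference dimension is hereditary. A minor sanity check is that $|S| = s \ge 2d$ for our choice of $s = O(d + n\log d)$, so that a size-$2d$ subset $T$ of $S$ exists.
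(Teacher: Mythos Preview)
Your argument is correct and essentially identical to the paper's: fix a witness $x_S$ on the event $E(S)$, use $\infer(T,x_S)\subseteq\infer(S,x_S)$ to get the first conjunct for free, and invoke Corollary~\ref{cor:inf_half} with $S$ in the role of $H$ to get the second conjunct with probability at least $1/4$. Your explicit remarks that inference dimension is hereditary and that $|S|\ge 2d$ are the only additions, and they simply make explicit what the paper leaves implicit.
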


\begin{proof}
For each $S$ for which $E(S)$ holds fix $x_S \in \R^n$ such that $|\infer(S,x_S) \cap H| < \frac{|H|}{8}$.
Then
$$
\Pr[E(S,T) \;|\; S] \ge \Pr \left[ |\infer(T,x_S) \cap H| < \frac{|H|}{8} \; \bigwedge \;
|\infer(T,x_S) \cap S| \ge \frac{|S|}{4} \right].
$$
The first condition holds with probability one, since $T \subset S$ and hence $\infer(T,x_S) \subset \infer(S,x_S)$.
For the second condition, as $T \subset S$ is a uniformly chosen subset of size $|T|=2d$, Corollary~\ref{cor:inf_half} gives
$$
\Pr_T\left[|\infer(T,x_S) \cap S| \ge \frac{|S|}{4} \; \bigg| \; S\right] \ge \frac{1}{4}.
$$
Thus
$$
\Pr[E(S,T) \;|\; S] \ge \frac{1}{4}
$$
As this holds for every $S$ for which $E(S)$ holds, we have $\Pr[E(S,T)|E(S)] \ge 1/4$, which implies the claim.
\end{proof}

We next bound the probability of $E(S,T)$. We will prove that for every fixed $T$,
$$
\Pr[E(S,T) \; | \; T] \le \frac{1}{8},
$$
which will conclude the proof.
So, fix $T \subset H$ of size $|T|=2d$.
Let $T-T$ denote the set $\{h'-h'' : h',h''\in T\}$,
and let $T^*= T\cup (T-T)$.
Recall that $\A{T^*}(x)$
is defined by
\[
\A{T^*}(x)= (\sign(\ip{h,x}): h \in T^*) \in \{-,0,+\}^{T^*}.
\]
Observe that the set $\infer(T,x)$ depends only on $\A{T^*}(x)$;
that is, if $\A{T^*}(x')=\A{T^*}(x'')$ then $\infer(T,x')=\infer(T,x'')$.
Let $X_T \subset \R^n$ be a set that contains one representative from each equivalence class of the relation
$x'\sim x'' \iff \A{T^*}(x') = \A{T^*}(x'')$.
Thus we can rephrase the event $E(S,T)$ as
\[
E(S,T) = \left[ \exists x \in X_T,\;
|\infer(T,x) \cap H| < \frac{|H|}{8} \; \bigwedge \;
|\infer(T,x) \cap S| \ge \frac{|S|}{4} \right].
\]
The advantage of considering $X_T$ is that now
we can bound the probability of $E(S,T)$ using
a union bound that depends on the (finite) set $X_T$.
More specifically, let
$$
X'_T \eqdef \left\{x \in X_T:\lvert\infer(T,x) \cap H\rvert < \frac{|H|}{8} \right\}.
$$
We thus established the following claim.
\begin{claim}
For every $T \subset H$,
$$
\Pr[E(S,T) \; | \; T] \le \sum_{x \in X'_T} \Pr_S \left[|\infer(T,x) \cap S| \ge \frac{|S|}{4} \; \bigg| \; T\right].
$$
\end{claim}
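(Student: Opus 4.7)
The plan is to derive the stated inequality by a standard union bound, after first reducing the existential quantifier over the uncountable set $\R^n$ in the definition of $E(S,T)$ to an existential quantifier over the finite set $X'_T$.

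First I would observe that the set $\infer(T,x)$, together with the inferred value of each of its elements, depends on $x$ only through the sign pattern $\A{T^*}(x)$, where $T^*=T\cup(T-T)$. Indeed, the definition of inference is phrased directly through $P_T(x)=\{x':\A{T^*}(x')=\A{T^*}(x)\}$, so if $\A{T^*}(x')=\A{T^*}(x)$ then $P_T(x')=P_T(x)$, and consequently $\infer(T,x')=\infer(T,x)$. By construction $X_T$ contains a representative of every realizable sign pattern of $\A{T^*}$, so for every $x\in\R^n$ there exists $\tilde x\in X_T$ with $\infer(T,\tilde x)=\infer(T,x)$. Hence in the definition of $E(S,T)$ one may replace the quantifier ``$\exists x\in\R^n$'' by ``$\exists x\in X_T$'' without changing the event.

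Next I would separate the two conditions defining $E(S,T)$ according to their dependence on the randomness. The first condition, $|\infer(T,x)\cap H|<|H|/8$, depends only on $T$ and $x$ and is independent of the random set $S$; by the very definition of $X'_T$, this condition holds precisely for $x\in X'_T$. Therefore, once the existential quantifier has been restricted to $X_T$, it may be further restricted to $X'_T$, since the remaining elements of $X_T$ contribute nothing to the event. This yields the event inclusion
\[
E(S,T)\;\subseteq\;\bigcup_{x\in X'_T}\Bigl\{\,|\infer(T,x)\cap S|\ge |S|/4\,\Bigr\}
\]
in the probability space of $S$, conditional on $T$.

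Finally I would apply the union bound to the right-hand side, which yields exactly the claimed inequality. This last step is routine; the genuine content of the claim lies in the reduction to the finite set $X'_T$, which is what makes a union bound feasible at all. The main obstacle is not in this claim itself but in the steps to come, where one must (i) bound $|X'_T|\le |X_T|$ polynomially in $d$ via Lemma~\ref{lem:cellcount} applied to $T^*$ (which has $O(d^2)$ elements, giving $|X_T|\le d^{O(n)}$), and (ii) show via hypergeometric concentration that each probability on the right-hand side is at most $e^{-\Omega(|S|/d)}$ whenever $x\in X'_T$. Choosing $|S|=O(d+n\log d)$ with a sufficiently large constant is then precisely what allows these two bounds to combine to give $\Pr[E(S,T)\mid T]\le 1/8$, and ultimately $\Pr[E(S)]<1$.
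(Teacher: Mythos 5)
Your proposal is correct and follows essentially the same route as the paper: reduce the existential over $\R^n$ to the finite set $X_T$ using the fact that $\infer(T,x)$ depends on $x$ only through $\A{T^*}(x)$, restrict further to $X'_T$ since the first condition is deterministic given $T$, and conclude by the union bound. The paper treats this claim as established by exactly this preceding discussion, so there is nothing to add.
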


To conclude, it suffices to upper bound $|X'_T|$ and the probability that
$|\infer(T,x) \cap S| \ge \frac{|S|}{4}$ for $x \in X'_T$.
Lemma~\ref{lem:cellcount} gives an upper bound on $|X_T|$ which also bounds $|X'_T|$,
\[\lvert X'_T\rvert \leq \lvert X_T\rvert = |\A{T^*}| \le (2e |T^*|)^n = 2^{O(n \log d)}.\]
%
%
%
We next bound the probability (over $S\supset T$) that
$|\infer(T,x) \cap S| \ge \frac{|S|}{4}$ for $x \in X'_T$.
\begin{claim}
Fix $T \subset H$ of size $|T|=2d$ and fix $x \in X'_T$. Assume that $s \ge 10 |T|$, and let $S$ be a uniformly sampled set of size $|S|=s$
such that $T \subset S\subset H$. Then
$$
\Pr_T\left[|\infer(T,x) \cap S| \ge \frac{|S|}{4} \;\Big\vert \; T \right] \le 2^{-\Omega(s)}.
$$
\end{claim}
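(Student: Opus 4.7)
The plan is to reduce the event in the claim to an upper-tail bound for a single hypergeometric random variable and then to invoke a standard Chernoff--Hoeffding estimate. First, because $T\subset S$, the conditional distribution of $S\setminus T$ is uniform over size-$(s-|T|)$ subsets of $H\setminus T$. Moreover, $T\subseteq \infer(T,x)$ trivially: for each $h\in T$, the label query on $h$ itself fixes $\sign(\ip{h,x})$. Setting $M\eqdef \infer(T,x)\cap (H\setminus T)$ and $Z\eqdef |M\cap (S\setminus T)|$, I would rewrite the event of interest as
\[
\bigl\{|\infer(T,x)\cap S|\ge s/4\bigr\} \;=\; \bigl\{Z \ge s/4-|T|\bigr\},
\]
where $Z$ is hypergeometric with population $|H\setminus T|$, sample size $s-|T|$, and $|M|$ marked items.

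Next I would estimate the marked fraction. Since $x\in X'_T$, we have $|M|\le |\infer(T,x)\cap H|< |H|/8$. The hypothesis $s\ge 10|T|$ together with $|H|\ge s$ (required for $S$ to be well-defined) yields $|H\setminus T|\ge (9/10)|H|$, whence
\[
p \;\eqdef\; \frac{|M|}{|H\setminus T|} \;<\; \frac{|H|/8}{(9/10)\,|H|} \;=\; \frac{5}{36},
\]
and thus $\E[Z]=p\,(s-|T|) < (5/36)\,s$. Using $|T|\le s/10$, the threshold satisfies $s/4-|T|\ge 3s/20$. The key numerical check is $5/36<3/20$ (equivalently $100<108$), which ensures that the threshold exceeds $\E[Z]$ by both a positive constant multiplicative factor and an absolute additive $\Omega(s)$ margin.

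Finally I would apply a Chernoff--Hoeffding tail bound to $Z$ --- for instance via Hoeffding's classical observation that upper tails of a hypergeometric are dominated by those of the corresponding binomial with the same mean --- to conclude $\Pr[Z\ge s/4-|T|]\le 2^{-\Omega(s)}$, as required. The main (and really only) obstacle is tightness of the numerical constants: the assumption $s\ge 10|T|$ leaves only a narrow but strictly positive gap between $\E[Z]$ and the threshold, so obtaining an exponential tail requires noting $5/36 < 3/20$ and being slightly careful with the Chernoff exponent. Were the constant $10$ replaced by a large constant, the argument would simplify to a $\delta=1$ multiplicative Chernoff bound with no numerical subtlety at all.
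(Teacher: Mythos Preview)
Your proposal is correct and follows essentially the same route as the paper: both arguments pass to $R=S\setminus T$, observe that $R$ is a uniform subset of $H\setminus T$ in which the marked fraction is close to (at most) $1/8$, and then apply a Chernoff/Hoeffding tail bound for the hypergeometric to get $2^{-\Omega(s)}$. Your treatment is in fact slightly more careful with the numerical constants (explicitly noting $T\subseteq\infer(T,x)$ and checking $5/36<3/20$), whereas the paper phrases the threshold as $|R|/6$ and appeals to Chernoff more tersely.
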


\begin{proof}
Let $R=S \setminus T$.
It suffices to bound the probability of the event that
$|\infer(T,x) \cap R| \ge \frac{|R|}{6}$.
Indeed, if $|\infer(T,x) \cap S| \ge \frac{|S|}{4}$ then
\[
|\infer(T,x) \cap R| \ge \frac{|S|}{4}-|T|  =  \frac{|R|+|T|}{4}-|T|\geq \frac{|R|}{6},
\]
where in the last inequality we used the assumption that $|R|\geq 9|T|$.

The set $R$ is a uniform subset of $H \setminus T$ of size $|R|=|S|-|T|$.
By assumption, at most $\frac{|H \setminus  T|}{8}$ of the elements
in $H \setminus T$ are in $\infer(T,x)$.
By the Chernoff bound,
the probability that at least $|R|/6$ of the sampled elements belong to $\infer(T,x)$
is thus exponentially small in $|R|$. This finishes the proof as $|R| \ge (9/10)s$.
\end{proof}

We now conclude the proof.
\[
\Pr[E(S,T) \; | \; T] \le |X'_T|  2^{-\Omega(s)} \le 2^{O(n \log d) - \Omega(s)} \le 1/8,
\]
as we choose $s=O(d+n \log d)$ with a large enough hidden constant. Then we also have $\Pr[E(S,T)] \le 1/8$ and
$$
\Pr[E(S)] \le 4 \Pr[E(S,T)] \le 1/2.
$$

%
%
%

\section{Further research}
\label{sec:open}

We prove that many combinatorial point-location problems have near optimal linear decision trees. Moreover,
these are comparison decision trees,
in which the linear queries are particularly simple: both sparse (in many cases) and have only $\{-1,0,1\}$ coefficients.
This raises the possibility of having improved algorithms for these problems in other models of computations.
To be concrete, we focus on $3$-SUM below, but the same questions can be asked for any other problem of a similar flavor.

\paragraph{Uniform computation.}
The most obvious question is whether the existence of a near optimal linear decision tree implies anything about uniform computation.
As showed in~\cite{gronlund2014threesomes}, this can lead to log-factor savings. It is very interesting whether greater savings can be achieved.
We do not discuss this further here, as this question has been extensively discussed in the literature (see e.g.~\cite{vassilevska2015hardness}).

\paragraph{Nonuniform computation.}
Let $A \subset \R$ be a set of size $|A|=n$. It is very easy to ``prove'' that $A$ is a positive instance of $3$-SUM,
by demonstrating three elements whose sum is zero.
However, it is much less obvious how to prove that $A$ is a negative instance of $3$-SUM.
This problem was explicitly studied in~\cite{carmosino2016nondeterministic}
in the context of nondeterministic ETH. They constructed such a proof which can be verified in time $O(n^{3/2})$. It seems plausible that
our current approach may lead to improved bounds. Thus, we propose the following problem.

\begin{open}
Given a set of $n$ real numbers no three of which sums to 0.
Is there a proof of that fact which can be verified in near-linear time?
\end{open}

\paragraph{$3$-SUM with preprocessing.}
Let $A \subset \R$ of size $|A|=n$. The $3$-SUM with preprocessing problem allows one to preprocess the set $A$ in quadratic time.
Then, given any subset $A' \subset A$, the goal is to solve that $3$-SUM problem on $A'$ in time significantly faster then $n^2$.
Chan and Lewenstein~\cite{chan2015clustered} designed such an algorithm, which solves that $3$-SUM problem on any subset in time $O(n^{2-\eps})$
for some small constant $\eps>0$. It is interesting whether our techniques can help improve this to near-linear time.

\begin{open}
Given a set of $n$ real numbers, can they be preprocessed in $O(n^2)$ time, such that later on, for every subset of the numbers
the $3$-SUM problem can be solved in time near-linear in $n$?
\end{open}

\paragraph{General point-location problem.}
It is natural to ask whether the techniques used in this paper, and in particular, the inference-dimension,
can be used to improve the state-of-the-art upper bounds for general point location problems.
Unfortunately, unless the set of hyperplanes $H$ has some combinatorial structure,
its inference dimension may be unbounded: in~\cite{kane2017active} we construct examples of $H \subset \R^3$ whose inference dimension is unbounded.
Nevertheless, we conjecture that by generalizing comparison queries
(which are $\pm 1$ linear combinations of two elements in $H$)
to arbitrary linear combinations of two elements from $H$ might solve the problem.
\begin{conjecture}
Let $H \subset \R^n$. There exists a linear decision tree which computes $\A{H}$ of depth $O(n \log |H|)$. Moreover,
all the linear queries are in $\{\alpha h' + \beta h'': \alpha,\beta \in \R, h',h'' \in H\}$.
\end{conjecture}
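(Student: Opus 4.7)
My plan is to adapt the inference-dimension framework of the paper to the richer query class
$Q = \{\alpha h' + \beta h'' : \alpha, \beta \in \R,\ h', h'' \in H\}$,
which I will call \emph{rank-2 queries}. The crucial observation is that with rank-2 queries one learns much more than just the signs of $\ip{h_i, x}$ and their relative order: by binary-searching over $\alpha/\beta$ one effectively recovers each pairwise ratio $\ip{h_i, x}/\ip{h_j, x}$ to arbitrary precision.

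\textbf{Step 1 (rank-2 inference dimension).} Define $S$ to \emph{rank-2 infer} $h$ at $x$ if $\sign(\ip{h,x})$ is determined by the family $\{\sign(\ip{q,x}) : q \in \text{span}_2(S)\}$, where $\text{span}_2(S) = \{\alpha h' + \beta h'' : h', h'' \in S,\ \alpha, \beta \in \R\}$. I claim the rank-2 inference dimension of \emph{any} finite $H \subset \R^n$ is at most $n+1$. Indeed, if $|S| \ge n+1$ then some $h_j \in S$ lies in the span of the others, say $h_j = \sum_{i \ne j} \lambda_i h_i$, so $\ip{h_j,x} = \sum_{i \ne j} \lambda_i \ip{h_i,x}$. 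The rank-2 queries on $S\setminus\{h_j\}$ reveal each individual sign $\sign(\ip{h_i,x})$ (take $\beta=0$) and each pairwise ratio $\ip{h_i,x}/\ip{h_k,x}$ (binary-search $\alpha$ in $\sign(\ip{\alpha h_i - h_k, x})$); together these pin down the vector $(\ip{h_i,x})_{i \ne j}$ up to a common positive scalar, which is enough to compute $\sign(\sum_i \lambda_i \ip{h_i,x})$.

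\textbf{Step 2 (algorithmic framework).} With bounded rank-2 inference dimension, I would port over Theorem~\ref{thm:zeroerror} and Theorem~\ref{thm:det}: sample $S \subseteq H_i$ of size $O(n)$, issue rank-2 queries to determine $\infer(S, x) \cap H_i$, and peel off inferred elements. The half-infer lemma (Lemma~\ref{lemma:inf_half}) and the double-sampling derandomization (Lemma~\ref{lemma:inf_half_uni}) carry over verbatim, since both rely only on the combinatorial inference relation. This yields $O(\log|H|)$ rounds.

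\textbf{Step 3 (query budget per round --- the main obstacle).} The hard part is bounding the actual number of rank-2 queries per round. The paper uses Fredman-style sorting to spend only $O(d + n \log d)$ comparisons per round; but rank-2 queries live in a continuous family, and naively binary-searching all $\binom{|S|}{2}$ pairwise ratios to the precision needed to resolve the signs of every relevant $h \in H$ would blow the $O(n \log|H|)$ total budget. To stay within budget, one must either (a) isolate a subfamily of $O(n)$ ``canonical'' rank-2 queries whose answers already determine $\infer(S, x) \cap H$ up to a constant fraction of $H$, or (b) argue Meiser-style that for every cell of the remaining arrangement there exists some $\alpha h' + \beta h''$ (with $h', h'' \in H$) whose hyperplane approximately halves the surviving cells. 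This second route reduces the conjecture to a \emph{restricted cutting lemma}: replacing Meiser's unrestricted cutting queries with 2-sparse combinations drawn from $H$, for which there is no evident proof.

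\textbf{Step 4 (derandomization).} Once Step 3 is settled, the double-sampling argument of Section~\ref{sec:det} should adapt: the cell-counting bound (Lemma~\ref{lem:cellcount}) applied to $\text{span}_2(T)$ --- which, although an infinite family, has only $O(|T|^2)$ distinct hyperplane directions --- still yields $|X'_T| \le 2^{O(n \log|T|)}$, and the remainder of the derandomization proceeds as written. The principal obstacle is therefore Step 3: the inference dimension controls combinatorial complexity, but converting this into an algorithm that uses only $O(n\log|H|)$ 2-sparse queries from $H$ appears to require a genuinely new geometric ingredient.
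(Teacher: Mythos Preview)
The statement you are attempting to prove is presented in the paper as a \emph{conjecture}, not a theorem; the paper offers no proof and explicitly lists it among the open problems in Section~\ref{sec:open}. There is therefore no ``paper's own proof'' to compare your proposal against.

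On the merits of the proposal itself: it is a plan, not a proof, and you already flag Step~3 as the unresolved obstacle. Your Step~1 is correct as a purely combinatorial statement --- knowing $\sign(\ip{\alpha h_i + \beta h_k, x})$ for \emph{all} real $\alpha,\beta$ determines the ray of $(\ip{h_i,x},\ip{h_k,x})$ in $\R^2$, hence the vector $(\ip{h_i,x})_{i\ne j}$ up to a positive scalar, hence $\sign\bigl(\sum_i \lambda_i \ip{h_i,x}\bigr)$. But this uses an infinite family of queries, so it says nothing about how many rank-2 queries suffice, which is exactly the content of the conjecture.

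There is also a genuine error in Step~4. The family $\text{span}_2(T)$ does \emph{not} have ``only $O(|T|^2)$ distinct hyperplane directions'': for each pair $h',h'' \in T$ the set $\{\alpha h' + \beta h''\}$ is a one-parameter pencil of hyperplanes, and the equivalence relation ``same rank-2 answers on $T$'' identifies $x'$ with $x''$ precisely when $(\ip{h,x'})_{h\in T}$ and $(\ip{h,x''})_{h\in T}$ lie on the same ray from the origin. Hence $X_T$ is a continuum, Lemma~\ref{lem:cellcount} does not apply, and the union bound in the double-sampling argument collapses. Any derandomization along these lines would first have to discretize the rank-2 query family to a finite set whose answers still determine $\infer(S,x)\cap H$ --- which is again the unresolved Step~3. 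In short, your outline correctly isolates where the difficulty lies, but does not overcome it; the conjecture remains open.
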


\paragraph{Optimal bounds.}
We suspect that our analysis can be sharpened to improve the log-factors that separate it from the information theoretical lower bounds.
For concreteness, we pose the following conjecture.
\begin{conjecture}
For any $H \subset \{-1,0,1\}^n$ there exists a comparison decision tree which computes $\A{H}$ with $O(n \log |H|)$ many queries.
In particular,
\begin{itemize}
\item $3$-SUM on $n$ real numbers can be solved by a $6$-sparse linear decision tree which makes $O(n \log n)$ queries.
\item Sorting $A+B$, where $A,B$ are sets of $n$ real numbers, can be solved
by a $4$-sparse linear decision tree which makes $O(n \log n)$ queries.
\item SUBSET-SUM on $n$ real numbers can be solved
by a linear decision tree which makes $O(n^2)$ queries.

\end{itemize}
\end{conjecture}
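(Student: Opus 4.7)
My plan is to shave the extraneous $\log n$ factor from Corollary~\ref{cor:general} for the special case $H \subset \{-1,0,1\}^n$, by simultaneously tightening Theorem~\ref{thm:infdim} and the per-round cost in Theorem~\ref{thm:det}. Once the main bound is reduced to $O(n \log |H|)$, the three specific items follow by plugging in $|H| = \binom{n}{3}$ for $3$-SUM, $|H| = O(n^4)$ for sorting $A+B$, and $|H| = 2^n$ for SUBSET-SUM; the sparsities $6$, $4$, and $n$ come directly from the sparsity of each $H$ and the definition of a comparison query.

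For the inference dimension, I would aim to prove that the inference dimension of $H \subset \{-1,0,1\}^n$ is $O(n)$, not $O(n \log n)$. The $\log n$ loss in Claim~\ref{claim:pigeonhole} arises from counting all possible values of $\sum_i \beta_i(h_{i+1}-h_i)$, which blows up with the $\ell_1$ norm of the $h_i$. For $\{-1,0,1\}^n$ one would instead try to find a short linear dependence directly: among $m=cn$ vectors in $\{-1,0,1\}^n$, one hopes to locate $O(1)$ vectors whose signed sum is zero, via a Sauer--Shelah type argument on supports, or via Carath\'eodory's theorem applied to the convex cone they span. Either variant would replace the entropic blow-up by a purely combinatorial or convex-geometric bound that is insensitive to the scale~$w$.

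For the algorithmic step, Fredman's sorting imposes $\Omega(n\log d)$ comparisons per round, which is the second source of the extra $\log n$. I would replace the ``sample-and-fully-sort'' template by an amortized scheme: across all $O(\log |H|)$ rounds, only $O(n \log |H|)$ comparisons are asked in total, and they are carried over between rounds. Concretely, after the first round the algorithm already has a partial order on $S_1$ encoding significant information about~$x$; in subsequent rounds one would pick $S_i$ to overlap with previously sorted elements and query only the genuinely new comparisons, amortizing the $n\log d$ sort cost over $\log |H|$ rounds.

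The hard part is showing that this amortization really does collapse the total cost. One must verify both that the inference-dimension bound $d=O(n)$ for $\{-1,0,1\}^n$ holds (the current $O(n\log w)$ argument appears tight for the counting method used) and that the reused comparisons still let the algorithm infer a constant fraction of $H_i$ in each round. A successful proof would likely require an ``online'' variant of the double-sampling argument in Lemma~\ref{lemma:inf_half_uni}, tailored to persistent rather than fresh samples, combined with a volumetric inference-dimension bound exploiting that $\{-1,0,1\}^n$ sits inside the unit cube. In the absence of such new ingredients I would expect the conjecture to resist a direct application of the framework of this paper.
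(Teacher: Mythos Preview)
The statement you are addressing is a \emph{conjecture} in the paper, not a theorem: the authors explicitly pose it as open and remark that Corollary~\ref{cor:general} only gives $O(n\log n\log|H|)$, with the goal being ``to shave the $\log n$ factor.'' There is therefore no proof in the paper to compare against. Your write-up is not a proof either; it is a research outline, and you say as much in your final paragraph.

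That said, your diagnosis of where the extra $\log n$ comes from is accurate and matches the paper's framework: one factor from the inference-dimension bound $d=O(n\log w)$ in Theorem~\ref{thm:infdim}, and one from the $O(n\log d)$ Fredman-sorting cost per round in Theorem~\ref{thm:det}. Your proposed fixes---a tighter $d=O(n)$ bound for $\{-1,0,1\}^n$ and an amortized sorting scheme across rounds---are reasonable directions, but neither is actually carried out. The pigeonhole argument in Claim~\ref{claim:pigeonhole} genuinely seems to need $m=\Omega(n\log n)$ even when $w=O(n)$, so a different mechanism (not just a sharper count) would be required; your appeals to Sauer--Shelah or Carath\'eodory are gestures, not arguments. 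Likewise, the amortization idea would need a new version of Lemma~\ref{lemma:inf_half_uni} that controls inference from \emph{partially} sorted samples, and you do not supply one.

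One small correction: the sparsity $4$ in the sorting-$A+B$ bullet does \emph{not} follow from ``the sparsity of $H$ and the definition of a comparison query.'' The set $H$ here is $4$-sparse, so comparison queries are $8$-sparse (as in Theorem~\ref{thm:sortAB}); getting down to $4$-sparse queries is an additional strengthening the conjecture is asking for, not a consequence of the general bound.
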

Note that Corollary~\ref{cor:general} gives a bound of $O(n \log{n} \log|H|)$ for this problem. So, the goal is
to shave the $\log n$ factor.

\bibliographystyle{alpha}
\bibliography{LDT}

\end{document}